%
%
%
%

\documentclass{llncs}

\usepackage{makeidx}
\usepackage[english]{babel}
\usepackage[utf8]{inputenc}
\usepackage[bookmarks]{hyperref}
\usepackage{wrapfig}
\usepackage{amsmath}
\usepackage{amssymb}

\newif\ifappendix
\appendixtrue

\usepackage{listings}
\lstdefinelanguage{pseudo}{
	morekeywords={procedure, if, then, else, while, assume, assert, return},
	sensitive=false,
	morecomment=[l]{//},
	morecomment=[s]{/*}{*/},
	morestring=[b]",
}
\lstset{
	mathescape=true,
	language=pseudo
}

\newcommand{\dline}{\noindent\hrulefill}
\newcommand\ignore[1]{{}}

\newcommand{\lt}{{\scriptstyle\mathbf{LT}}} 
\newcommand{\inc}{{\mathrm{IC}}} 
\newcommand{\syc}{{\mathrm{SC}}} 
\newcommand{\gbdiv}[2]{\mathrm{NF}_{#2}(#1)} 
\newcommand{\stem}{{\mathsf{stem}}} 
\newcommand{\exit}{{\mathsf{exit}}} 
\newcommand{\post}{{\mathsf{post}}} 
\newcommand{\reals}{{\mathbb{R}}}
\newcommand{\complex}{{\mathbb{C}}}
\newcommand{\integers}{{\mathbb{Z}}}
\newcommand{\nats}{{\mathbb{N}}}


\newcommand{\pdftitle}{Synthesis for Polynomial Lasso Programs}
\newcommand{\pdfkeywords}{loop invariant, program synthesis,
  termination, Gröbner basis}
\newcommand{\pdfsubject}{Static Analysis}

\hypersetup{
	pdftitle={\pdftitle},
	pdfauthor={Jan Leike, Ashish Tiwari},
	pdfkeywords={\pdfkeywords},
	pdfsubject={\pdfsubject},
	pdflang={en-US}
}
\title{\pdftitle\thanks{The final publication is available
at \href{http://link.springer.com/}{link.springer.com}.}}
\titlerunning{\pdftitle}
\author{Jan Leike\inst{1} \and Ashish Tiwari\inst{2}\thanks{Supported in part by DARPA under contract FA8750-12-C-0284 and by the NSF grant SHF:CSR-1017483.}}
\authorrunning{J. Leike and A. Tiwari}
\institute{
  University of Freiburg, Germany.\ \url{leike@informatik.uni-freiburg.de} \\
  \and SRI International, Menlo Park, CA.\ \url{ashish.tiwari@sri.com}
}
\date{\today}


\begin{document}

\maketitle

\begin{abstract}
We present a method for the synthesis of polynomial lasso programs.
These programs consist of a program stem, a set of transitions,
and an exit condition, all in the form of algebraic assertions (conjunctions of
polynomial equalities).  Central to this approach is the discovery of
non-linear (algebraic) loop invariants.  We extend Sankaranarayanan,
Sipma, and Manna's template-based approach and prove a completeness
criterion.  We perform program synthesis by generating a constraint
whose solution is a synthesized program together with a loop invariant
that proves the program's correctness.  This constraint is non-linear
and is passed to an SMT solver.  Moreover, we can enforce the
termination of the synthesized program with the support of test cases.
\end{abstract}


\section{Introduction}
\label{sec-induction}

There have been significant advances in automating 
program verification, and even extending the verification
techniques to perform automated synthesis of correct programs.
Often, automation is achieved using appropriate abstract
domains for analysis.  The choice of abstract domains is governed
by the class of program fragments being analyzed.
In this paper, we are interested
in programs that perform some numerical computation.  For reasoning
about such programs, the theory of polynomial ideals has proven to be
an excellent abstract domain because of two reasons.  First, there is
a nice correspondence between subsets of the program state space and
polynomial ideals (as established in the field of algebraic geometry),
and second, there are effective algorithms for computing with
polynomial ideals.  In this paper, we will use the abstract domain of
polynomial ideals for reasoning about polynomial lasso programs.

In our terminology, a polynomial lasso program consists of an assertion
describing program states before loop entry, an assertion describing
program states after loop termination and a set of transitions
corresponding to the branches in the loop body.  All involved
assertions are algebraic; that is, conjunctions of polynomial
equalities.

Our approach for analysis of such polynomial lasso programs is not based on
iterative fixpoint computation.  Instead, we use the constraint-based
approach, also known as template-based approach, for directly finding
fixpoints using constraint solving.  This way we avoid convergence
issues of iterative fixpoint methods.
Our starting point is a 
method presented by Sankaranarayanan,
Sipma and Manna~\cite{Sankaranarayanan04}.  
Despite its obvious incompleteness, the method is often successful
in verifying programs.  Why is this method ``complete in practice''?
We answer the question here by presenting a first 
completeness criterion for this method.  For this purpose, we have to
extend the original invariance criteria in~\cite{Sankaranarayanan04}
and generate a new and refined \emph{invariance condition}.

Our interest here is not just on the verification problem, but
also on the synthesis problem.  Specifically, taking inspiration from
recent work on synthesis of programs by completing partial program
``sketches''~\cite{sketch,bitvector}, we start with a polynomial lasso program
that contains parameters (variables to be synthesized) and a post
condition.  The goal is to find values for the parameters that result
in a correct program.  We solve the synthesis problem by generating a
\emph{synthesis constraint}---a constraint whose solution provides a
valuation for the parameters.  Additionally, the constraint's
solution also supplies values that define an inductive loop invariant
for the synthesized polynomial lasso program.  This invariant constitutes as
proof that the synthesized program is in fact correct with respect to
the given post condition.  Thus, we simultaneously synthesize the
program and its proof of correctness.  There is one caveat though: if
variables that are critical to termination have parameterized
updates, then the synthesized lasso program might not be terminating.  To solve
this problem, we use a finite number of test cases that specify input
variable assignment, output variable assignment and a sequence of loop
transitions.  These test cases are used to strengthen the synthesis
constraint so that the undesirable solutions are eliminated.

The template-based approach reduces the synthesis problem and the loop
invariant discovery problem into an $\exists\forall$ constraint: the
template variables and the synthesis variables are existentially
($\exists$) quantified, whereas the program variables are universally
($\forall$) quantified~\cite{bitvector}.  We use the theory of
polynomial ideals to (conservatively) eliminate the inner $\forall$
quantifier.  The resulting formula is our synthesis constraint -- an
(existentially quantified) conjunction of non-linear algebraic
equalities -- which is solved by an off-the-shelf non-linear SMT
solver.

We demonstrate that the template-based approach on polynomial ideals
abstract domain can be used to successfully synthesize polynomial
lasso programs.  However, the approach has certain limitations.  First, it
cannot handle inequalities.  Polynomial ideals logically correspond to
conjunctions of polynomial equalities.  Now, inequalities can be
encoded as equalities, but algorithms on polynomial ideals (that
compute canonical Gröbner basis) do not lift easily to reasoning
about the encoded inequalities~\cite{Tiwari05:CSL}.  For handling
inequalities, one could use semialgebraic sets as the abstract domain,
and then use algorithms based on either \emph{cylindric algebraic
  decomposition}~\cite{Collins75} or the
\emph{Positivstellensatz}~\cite{Stengle74,Parrilo03,Tiwari05:CSL}, but
we leave that for future work.

A second issue is the size of the synthesis constraint.  Non-linear
solvers scale very poorly with increasing number of variables and the
synthesis constraint (generated by the synthesis process) can be large
and tends to be non-linear.

The final issue is related to the completeness of our approach.
Incompleteness arises due to the use of templates, and also due to the
use of polynomial ideal theory rather than the theory of reals.  We
address the latter issue in \autoref{sec-invariants}.  For the former
issue, we just have to use polynomial templates with sufficiently
large degree bounds.  In our examples, a general template of degree
two or three was sufficient, but the size of generic template
polynomials grows exponentially with their degree.


\section{Related Work}
\label{sec-related-work}

The automatic discovery of polynomial invariants for imperative
programs has received a lot of attention in recent years.  Müller-Olm
and Seidl generate invariant polynomial equalities of bounded degree
by backwards propagation~\cite{MOS04}.  This can be seen as an
extension to Karr's algorithm~\cite{Karr76}, which uses only linear
arithmetic.  Seidl, Flexeder and Petter apply the
backwards-propagation method to programs over machine integers, i.e.,
programs whose variables range over the domain
$\integers_{2^w}$~\cite{Seidl08}.

Rodrígues-Carbonell and Kapur use an iterative approach based on
forward propagation and fixed point computation on Gröbner bases over
the lattice of ideals to generate the ideal of all loop
invariants~\cite{RCK04,RCK05}.

Colón combines the two aforementioned approaches by doing the fixed
point computation on ideals with linear algebra~\cite{Colon07}.  He
introduces the notion of pseudo-ideals to ensure termination of
the fixed point computation while retaining the expressiveness of
generated invariants.

Polynomial program invariants can also be derived without using Gröbner
basis computations~\cite{CJJK12}. Cachera et al.\ use backwards analysis and
variable substitution on template polynomials for an incomplete approach.

The constraint solving approach that generates invariant polynomial
equalities using templates was proposed by Sankaranarayanan, Sipma and
Manna~\cite{Sankaranarayanan04}.
Invariant generation is a central ingredient to our synthesis method, so we want the invariant generation process to be as complete as possible.
Therefore we extend their approach by using a more general condition for the invariant (see also \autoref{rem-ssm-comparison}) that
enables us to state a completeness criterion.

Polynomial lasso programs have also received some attention regarding the
analysis of their termination properties.  Bradley, Manna and Sipma
use finite difference arithmetic to compute lexicographic polynomial
ranking functions for polynomial lasso programs~\cite{Bradley05}.

All the aforementioned papers consider the verification (or the
invariant generation) problem.  In this paper, inspired by recent work
on program synthesis~\cite{sketch}, we also consider the synthesis
problem.
Our work can be considered a more formal approach to Col{\'o}n's
method~\cite{Colon04} that uses non-linear constraint solving to
instantiate program schemata (parameterized programs augmented with
constraints).
Our approach relies on algebraic methods instead of heuristics.

Finally, Srivastava et al.~\cite{SSG10} describe a big-picture program synthesis algorithm from scaffolds.
These scaffolds consist of pre- and postconditions, a program flow template, and bounds on the number of variables and the number of local branches.
For the synthesis condition, all control flows of the template program are unfolded and constraints are generated with respect to invariants and ranking functions ensuring the program's correctness and termination.
This constraint is then proven by a specialized external method and
our algorithm can be used as one of these external methods.


\section{Preliminaries}
\label{sec-preliminaries}

Let $V$ be a set of variables, $V = \{ x_1, \ldots, x_n \}$.  The
variables of the `next state' are denoted by the corresponding primed
variables $V' = \{ x_1', \ldots, x_n' \}$.  Having both primed and
unprimed variables in an expression enables stating a relationship
between two states.

For the set of real numbers $\reals$, let $\reals[V]$ denote
the ring of polynomials in the variables $V$ with coefficients from
$\reals$.  
A subset $I \subseteq \reals[V]$ is an \emph{ideal} if
(a) $0 \in I$, 
(b) $f + g \in I$ for all $f, g \in I$, and
(c) $h \cdot f \in I$ for all $f \in I$ and $h \in \reals[V]$.
For a set of polynomials $P = \{ p_1(V), \ldots, p_k(V) \}$,
{\em{the ideal $\left< P\right>$ generated by $P$}} is 
\[
\left< P \right> = \left< p_1, \ldots, p_k \right>
= \Big\{ \sum_{i=1}^k q_i(V) p_i(V) \,\Big|\, q_1, \ldots, q_k \in
\reals[V] \Big\}.
\]
Note that if all polynomials in $P$ evaluate to $0$ at any point in $\reals^n$,
then all polynomials in $\left< P\right>$ will also evaluate to
$0$ at that point.

By the Hilbert Basis Theorem, every ideal $I$ has a finite set of
generators.  
Moreover, for a fixed ordering on the monomials (such as
total degree lexicographic ordering induced by any precedence
relation on the variables),
there is a finite ``canonical'' set of
generators of $I$ called a \emph{Gröbner basis}.  
A Gröbner basis $G =
\{ g_1, \ldots, g_k \}$ for $I$ has the following properties~\cite{Cox91}.
\begin{enumerate}
  \item $G$ is computable in {\small{DOUBLE-EXPSPACE}} from a set of generators of $I$
  (Buchberger's Algorithm).
\item For all $p \in \reals[V]$, the result of division of $p$ on
  $G$, denoted $\gbdiv{p}{G}$, is unique and does not depend on the
  order in which the division steps are performed. 
\item For all $p \in \reals[V]$, $\gbdiv{p}{G} = 0$ iff $p \in I$.
\end{enumerate}
For example, if $P = \{xy-2, x^2-4\}$ and we use the
precedence $x \succ y$, then
$G = \{x-2y, y^2-1\}$ is a 
Gröbner basis for the ideal $\left< P\right>$.
Division of $p$ on $G$ can be performed by replacing $x$ by $2y$
and replacing $y^2$ by $1$ in $p$ repeatedly.
The result $\gbdiv{x^2+y^2-5}{G}$ of division of $x^2+y^2-5$ on $G$
is $0$, and hence we can conclude that $x^2+y^2-5\in\left< P\right>$.

\begin{definition}[Radical Ideal]\label{def-radical}
An ideal $I$ is a \emph{radical ideal} if $f^m \in I$ implies $f \in
I$ for every $m \in \nats$.
\end{definition}
Given an ideal $I$, note that the set 
$\{f \mid \exists{m \in\nats}: f^m\in I\}$ is a (radical) ideal.

\begin{definition}[Algebraic Assertion]\label{def-algebraic-assertion}
An \emph{algebraic assertion} $\varphi(V)$ (or just $\varphi$) 
over the set of variables $V$ is a
formula of the form $\bigwedge_{i=1}^m p_i(V) = 0$ where each $p_i \in
\reals[V]$ for $1 \leq i \leq m$.
\end{definition}

An algebraic assertion $\bigwedge_{i=1}^m p_i(V) = 0$ generates an
ideal $\left< \varphi\right> = \left< p_1, \ldots, p_m \right>$.  
We will use $\varphi$ to denote the formula as well as the
set of polynomials $\{p_1,\ldots,p_m\}$ in the formula.
An assertion $\varphi$ can be interpreted in the theory $\reals$
of reals or in the theory $\complex$ of complex numbers. 
A {\em{valuation}} is a mapping from variables to values (in the set
of real numbers or the set of complex numbers).
A polynomial in $\reals[V]$ evaluates to a value 
(in $\reals$ or $\complex$) for a given valuation for $V$.

\begin{theorem}[Zero Polynomial Theorem]\label{thm-polynomial-zero}
A polynomial $p \in \reals[V]$ is zero for all possible valuations
$\nu: V \to \reals$ if and only if all of its coefficients are
zero.
\end{theorem}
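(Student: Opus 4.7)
The plan is to prove the easy direction trivially and attack the nontrivial direction by induction on the number $n = |V|$ of variables.

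The reverse implication is immediate: if every coefficient of $p$ is zero, then $p$ is the zero polynomial, so it evaluates to $0$ under every valuation. The interesting direction is the forward one, asserting that being identically zero as a function implies being formally the zero polynomial. I would establish it by induction on $n$.

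For the base case $n = 1$, let $p \in \reals[x_1]$ be a polynomial of degree $d$. Assuming $p$ evaluates to zero at every point of $\reals$, it has infinitely many roots. Since a nonzero univariate polynomial of degree $d$ over a field has at most $d$ roots (a standard consequence of the division algorithm for univariate polynomials), $p$ must be the zero polynomial, so all its coefficients vanish. This is the core arithmetic fact that makes the whole theorem go through and it is where the assumption that the coefficient field is infinite (here, $\reals$) is essential — the statement would fail over a finite field.

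For the inductive step, assume the statement holds for polynomials in $n-1$ variables, and let $p \in \reals[x_1, \ldots, x_n]$ vanish on all of $\reals^n$. I would group $p$ by powers of $x_n$, writing
\[
p(x_1,\ldots,x_n) = \sum_{i=0}^d q_i(x_1,\ldots,x_{n-1})\, x_n^i,
\]
where each $q_i \in \reals[x_1,\ldots,x_{n-1}]$. For any fixed real valuation of $x_1,\ldots,x_{n-1}$, the resulting univariate polynomial in $x_n$ is identically zero on $\reals$, so by the base case each $q_i$ evaluates to zero at that valuation. Since the valuation was arbitrary, each $q_i$ vanishes on all of $\reals^{n-1}$; by the inductive hypothesis, every coefficient of every $q_i$ is zero, and these coefficients are precisely the coefficients of $p$.

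The main (only) obstacle is the base case, which requires the field to be infinite; everything else is a clean induction. I do not expect any substantive difficulties beyond invoking the univariate root count.
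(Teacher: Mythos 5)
Your proof is correct. The paper states this theorem without proof (it is a standard fact about polynomials over an infinite field), so there is no in-paper argument to compare against; your induction on the number of variables, reducing to the univariate root-counting bound in the base case, is the textbook argument and is sound, including your correct observation that the infinitude of $\reals$ is exactly what makes the base case work.
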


\begin{lemma}\label{lem-nullstellensatz-converse}
Let $\varphi$ be an algebraic assertion over $V$ and $p \in
\reals[V]$ a polynomial.  If $p \in \left<\varphi\right>$, 
then $\reals
\models \varphi(V) \rightarrow p(V) = 0$.
\end{lemma}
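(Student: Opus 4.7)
The plan is to unfold definitions and observe that this is essentially the easy ``evaluation'' direction already mentioned in the preliminaries just before the ideal generated by $P$ was introduced. No deep result like Hilbert's Nullstellensatz is required here; what is required is its converse direction, which is purely syntactic.

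First I would fix notation by writing $\varphi = \bigwedge_{i=1}^m p_i(V)=0$, so that $\langle\varphi\rangle = \langle p_1,\ldots,p_m\rangle$. The hypothesis $p\in\langle\varphi\rangle$ then unfolds, by the definition of the ideal generated by a finite set, into the existence of polynomials $q_1,\ldots,q_m\in\reals[V]$ with
\[
p(V) \;=\; \sum_{i=1}^m q_i(V)\,p_i(V).
\]

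Next I would fix an arbitrary valuation $\nu\colon V\to\reals$ satisfying $\varphi$, i.e.\ such that $p_i(\nu(V))=0$ for every $i=1,\ldots,m$. Evaluating the identity above at $\nu$ gives
\[
p(\nu(V)) \;=\; \sum_{i=1}^m q_i(\nu(V))\cdot p_i(\nu(V)) \;=\; \sum_{i=1}^m q_i(\nu(V))\cdot 0 \;=\; 0,
\]
so $\nu$ also satisfies $p(V)=0$. Since $\nu$ was arbitrary, this establishes $\reals\models\varphi(V)\rightarrow p(V)=0$, as required.

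There is essentially no obstacle here: the only thing to be careful about is not to invoke the Nullstellensatz itself (which would give the nontrivial converse direction over $\complex$), and to make sure the combination identity is interpreted as an equality in $\reals[V]$ so that it remains valid after substituting any real values for the variables. The lemma holds verbatim if $\reals$ is replaced by $\complex$ with the same proof.
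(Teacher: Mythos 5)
Your proof is correct and matches the paper's own reasoning: the paper gives no separate proof for this lemma, but the remark immediately following the definition of $\left<P\right>$ (``if all polynomials in $P$ evaluate to $0$ at a point, then so do all polynomials in $\left<P\right>$'') is exactly your evaluation argument. Nothing is missing, and your cautions about not invoking the Nullstellensatz are apt.
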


\begin{theorem}[Hilbert's Nullstellensatz~\cite{Cox91}]\label{nullstellensatz}
Let $\varphi$ be an algebraic assertion and $p \in \complex[V]$ a
polynomial.  If $\left<\varphi\right>$ is a radical ideal and $\complex \models
\varphi(V) \rightarrow p(V) = 0$, then $p \in \left<\varphi\right>$.
\end{theorem}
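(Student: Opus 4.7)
The plan is to derive the statement from the classical weak Nullstellensatz via the Rabinowitsch trick. I would treat the weak form as a black box: if $J \subseteq \complex[V]$ is a proper ideal, then the common zero set of $J$ in $\complex^n$ is non-empty. This is the only substantive algebraic-geometric input the proof needs; everything else is ideal-theoretic manipulation.

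First I would set $I := \left<\varphi\right>$, let $f_1, \ldots, f_m$ be the generators coming from $\varphi$, and introduce a fresh variable $y$. Form the ideal $J := \left<f_1, \ldots, f_m, 1 - y\cdot p\right> \subseteq \complex[V, y]$. I would then argue that $J$ has no common zero in $\complex^{n+1}$: any common zero $(\nu, b)$ would satisfy $\varphi(\nu)$, which by the hypothesis $\complex \models \varphi(V) \rightarrow p(V) = 0$ gives $p(\nu) = 0$, contradicting $1 - b \cdot p(\nu) = 0$. Hence, by the weak Nullstellensatz, $J$ is the unit ideal, so $1 \in J$, yielding an identity
\[
1 = \sum_{i=1}^m q_i(V, y)\, f_i(V) + q(V, y)\bigl(1 - y\cdot p(V)\bigr)
\]
for some polynomials $q_1, \ldots, q_m, q \in \complex[V, y]$.

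The key algebraic step is to substitute $y = 1/p(V)$ in this identity (formally, work in the localization $\complex[V][1/p]$, or equivalently map $\complex[V, y]$ to $\complex(V)$ by $y \mapsto 1/p$). The term containing $1 - yp$ vanishes, and clearing denominators by multiplying through by a sufficiently large power $p(V)^N$ produces an identity of the form
\[
p(V)^N = \sum_{i=1}^m \tilde{q}_i(V)\, f_i(V),
\]
where each $\tilde{q}_i \in \complex[V]$. This exhibits $p^N$ as an element of $I$. Since $I = \left<\varphi\right>$ is radical by hypothesis, \autoref{def-radical} immediately gives $p \in I$, which is the conclusion.

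The hard part, or rather the one I would not try to reprove here, is the weak Nullstellensatz itself, whose standard proofs go via Noether normalization or via the fact that every maximal ideal of $\complex[V]$ has the form $\left<x_1 - a_1, \ldots, x_n - a_n\right>$ and crucially use the algebraic closedness of $\complex$. The Rabinowitsch reduction above is then a short calculation, and the passage from $p^N \in I$ to $p \in I$ is immediate from the assumption that $\left<\varphi\right>$ is radical. Note that the analogous statement fails over $\reals$ (cf.\ \autoref{lem-nullstellensatz-converse}, which only gives the easy direction), so the use of $\complex$-valuations in the hypothesis is essential precisely at the invocation of the weak Nullstellensatz.
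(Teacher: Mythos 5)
Your proof is correct: the reduction of the strong Nullstellensatz to the weak form via the Rabinowitsch trick (adjoin $y$, form $\left<f_1,\ldots,f_m, 1-yp\right>$, observe it has no common zero, extract $1 = \sum q_i f_i + q(1-yp)$, substitute $y \mapsto 1/p$ and clear denominators to get $p^N \in \left<\varphi\right>$, then use radicality) is the standard argument, and the only unaddressed case, $p$ identically zero, is trivial since $0$ lies in every ideal. The paper itself offers no proof of this statement — it is quoted as a classical result with a citation to \cite{Cox91} — and the argument you give is essentially the one found in that reference, so there is nothing to compare against beyond noting that your use of $\complex$-valuations and of the radicality hypothesis matches exactly where the paper's surrounding discussion (cf.\ \autoref{ex-real-ideals}) says they are needed.
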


\begin{lemma}\label{lem-ideal}
Let $p, s \in \reals[V]$ and $\left<\varphi\right> \subseteq \reals[V]$ be an ideal.  Then,
$p \in \left< s, \varphi \right>$ if and only if
there is a polynomial $t \in
\reals[V]$ such that $p - t \cdot s \in \left<\varphi\right>$.
\end{lemma}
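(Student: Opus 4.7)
The plan is to unfold the definition of the ideal generated by a set of polynomials and show both implications directly, without needing any of the more sophisticated machinery (Nullstellensatz, Gröbner bases) introduced earlier.

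First I would fix notation: write $\varphi = \{p_1,\ldots,p_m\}$, so that by definition
\[
\langle s, \varphi\rangle \;=\; \Big\{\, t \cdot s + \sum_{i=1}^m q_i \, p_i \,\Big|\, t, q_1,\ldots,q_m \in \reals[V] \,\Big\}.
\]
This makes the decomposition $\langle s,\varphi\rangle = \langle s\rangle + \langle\varphi\rangle$ explicit.

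For the forward direction, if $p \in \langle s,\varphi\rangle$, then by the displayed definition there exist $t, q_1, \ldots, q_m \in \reals[V]$ with $p = t\cdot s + \sum_i q_i p_i$. Setting $r := \sum_i q_i p_i$, we have $r \in \langle\varphi\rangle$ by definition of the ideal generated by $\varphi$, and $p - t \cdot s = r \in \langle\varphi\rangle$, which is exactly what we need.

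For the backward direction, suppose there is a $t \in \reals[V]$ with $p - t\cdot s \in \langle\varphi\rangle$. Then $p - t\cdot s = \sum_i q_i p_i$ for some $q_i \in \reals[V]$, so $p = t\cdot s + \sum_i q_i p_i$, which is of the required form for membership in $\langle s, \varphi\rangle$.

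There is no real obstacle here; the argument is a one-line rearrangement in each direction. The only subtlety worth flagging explicitly is that ideals are closed under addition and under multiplication by arbitrary polynomials (properties (b) and (c) in the definition of an ideal), which is what legitimizes the decomposition $\langle s,\varphi\rangle = \langle s\rangle + \langle\varphi\rangle$ used above.
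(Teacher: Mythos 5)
Your proof is correct and follows essentially the same route as the paper's: both unfold the definition of $\left< s, \varphi\right>$ as sums of the form $t\cdot s + \sum_i q_i p_i$ and observe that the equivalence is a one-line rearrangement. Nothing further is needed.
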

\begin{proof}
Let $\left<\varphi\right> = \left< p_1, \ldots, p_k \right>$.  By definition, $p
\in \left< s, \varphi \right>$ iff there are $t, t_1, \ldots, t_k \in
\reals[V]$ such that $p = ts + \sum_i t_i p_i$.  This is
equivalent to $p - ts = \sum_i t_i p_i$, which holds iff $p - ts \in
\left<\varphi\right>$. \qed
\end{proof}

Similar to the definition by Sankaranarayanan et al., we introduce
template polynomials as a means for finding polynomials with certain
properties.  In our definition the template coefficients can be
non-linear polynomials. For the mathematical details regarding
template polynomials, see \cite{Sankaranarayanan04}.

\begin{definition}[Template Polynomial]\label{def-template}
Let $A$ and $V$ be two disjoint sets of variables.
A \emph{template polynomial} or \emph{template} over $(A, V)$ is a
polynomial with variables $V$ and coefficients from $\reals[A]$.
A template is said to be a \emph{linear template} if all of its
coefficient polynomials are linear.
\end{definition}

Template polynomials will be denoted by upper case Greek letters.
Given a degree bound $d$, the \emph{generic template polynomial}
$\Psi$ over $(A, V)$ of total degree $d$ is given by
\[
\Psi(V) = \sum_{|\gamma| \leq d} a_\gamma V^\gamma
\]
where $\gamma \in \nats^{\#V}$ is a multi-index and $A = \{
a_\gamma \,|\, \gamma \in \nats^{\#V} \}$ are template variables.

\begin{definition}[Semantics of Templates]\label{def-template-semantics}
For a set of template variables $A$, an \emph{$A$-valuation} is a map
$\alpha: A \to \reals$.  This map can be naturally extended to a map
$\tilde\alpha: \reals[A][V] \to \reals[V]$ that replaces every
occurrence of an $a \in A$ by $\alpha(a)$.
\end{definition}


\section{Polynomial Lasso Programs}
\label{sec-lasso}

We define the syntax and semantics of polynomial lasso programs.
We also define inductive invariants for such programs.
Henceforth, semantic entailment, $\models$, should always be
interpreted as in the theory $\reals$ of reals.

\begin{definition}[Polynomial Lasso Program]\label{def-lasso}
A \emph{polynomial lasso program} $L = (V, \stem, \mathcal{T}, \exit)$ consists
of
\begin{itemize}
\item a set of variables $V$,
\item an algebraic assertion $\stem$ over $V$ called the \emph{program stem},
\item a \emph{set of transitions} $\mathcal{T}$, where each transition
  $\tau \in \mathcal{T}$ is an algebraic assertion over $V \cup V'$,
\item and an algebraic assertion $\exit$ over $V$, called the
  \emph{exit condition}.
\end{itemize}
A transition $\tau$ is said to be \emph{deterministic} if it can be
written in the form
\[
\bigwedge_j h_j(V) = 0 \;\land\; \bigwedge_i x_i' g_i(V) - f_i(V) = 0,
\]
where every $x_i' \in V'$ occurs exactly once and $\neg \exit \models
g_i(V) \neq 0$.
For every $i$ and $j$, the polynomial $h_j$ is called \emph{guard} and the
polynomial $x_i' g_i(V) - f_i(V)$ is called \emph{update}: $f_i$ is
its \emph{numerator} and $g_i$ its \emph{denominator}.
The polynomial lasso program $L$ is called \emph{pseudo-deterministic} if all
its transitions $\tau \in \mathcal{T}$ are deterministic.
\end{definition}

Lassos with solely deterministic transitions can have overlapping
guards, hence the choice of transitions may be non-deterministic even
in a pseudo-deterministic polynomial lasso program.
Due to the nature of imperative languages, pseudo-deterministic
lassos possess a specific interest to us.

\begin{definition}[Semantics of a Lasso Program]\label{def-lasso-semantics}
Let $L = (V, \stem, \mathcal{T}, \exit)$ be a polynomial lasso program.
An \emph{execution of $L$} is a (potentially infinite) sequence $\sigma =
\nu_0 \nu_1 \ldots$ where $\nu_i: V \to \reals$ is a valuation on
the variables $V$ such that
\begin{enumerate}
\item $\nu_0 \models \stem$
\item For all $i \geq 0$ there is a $\tau \in \mathcal{T}$ such that
  $\tau(\nu_i, \nu_{i+1})$.
\item $\nu_i \models \exit$ iff it is the last element in $\sigma$.
\end{enumerate}
\end{definition}

\begin{example}[Running example]\label{ex-product1}
Consider the imperative program and its lasso representation
$L$ shown in \autoref{fig-product}.
$L$ is a pseudo-deterministic lasso program since $\tau$ is a deterministic
transition with the two update polynomials $y' - y + 1$ and $s' - s -
x_0$ and no guards.
An execution of $L$ is $\sigma = \nu_0 \nu_1$ where
\begin{center}
\begin{tabular}{lllll}
$\nu_0$:\;\; & $x_0 \mapsto 3$ & $y_0 \mapsto 1$ & $y \mapsto 1$ & $s \mapsto 0$, \\
$\nu_1$:\;\; & $x_0 \mapsto 3$ & $y_0 \mapsto 1$ & $y \mapsto 0$ & $s \mapsto 3$.
\end{tabular}
\end{center}
\end{example}

\begin{figure}[t]
\begin{center}
\begin{tabular}{c@{$\quad$}|@{$\quad$}c}
\begin{minipage}{50mm}
\begin{lstlisting}
procedure product($x_0$, $y_0$):
    $s$ := $0$;
    $y$ := $y_0$;
    while ($y \neq 0$):
        $s$ := $s + x_0$;
        $y$ := $y - 1$;
    return $s$;
\end{lstlisting}
\end{minipage}
&
\begin{minipage}{60mm}
  Lasso program
$L = (V, \stem, \mathcal{T}, \exit)$: 
\begin{eqnarray*}
  V       &=&    \{ x_0, y_0, y, s \}, \\
\stem &\equiv &  s = 0 \land y = y_0, \\
\tau  &\equiv & y' = y - 1 \land s' = s + x_0, \\
\exit &\equiv & y = 0, \\
\mathcal{T} &=& \{ \tau \}
\end{eqnarray*}
\end{minipage}
\end{tabular}
\end{center}
\caption{An example imperative code and its representation as 
  a polynomial lasso program (see \autoref{ex-product1}).  
  The program performs a multiplication by repeated addition.}
\label{fig-product}
\end{figure}

\begin{definition}[Correctness]\label{def-correctness}
Let $L = (V, \stem, \mathcal{T}, \exit)$ be a polynomial lasso program and
let $\post$ be an algebraic assertion over $V$.
The lasso $L$ is said to be \emph{(partially) correct with respect to
  the post condition $\post$} if for every finite execution $\sigma$
of $L$, the last valuation in $\sigma$ is a model of $\post$.  $L$ is
\emph{totally correct with respect to $\post$} if it is partially
correct with respect to $\post$ and it is terminating, i.e., there are
no infinite executions of $L$.
\end{definition}

\begin{definition}[Invariant]\label{def-invariant}
Let $L = (V, \stem, \mathcal{T}, \exit)$ be a polynomial lasso program.
A polynomial $p \in \reals[V]$ is called an \emph{(inductive)
invariant} of a transition $\tau \in \mathcal{T}$ if
\begin{enumerate}
\item $\stem \models p(V) = 0$ and
\item $p(V) = 0 \land \tau(V, V') \land \neg\exit \models p(V') = 0$.
\end{enumerate}
The polynomial $p$ is called an \emph{(inductive) invariant of $L$} if
it is an invariant of all transitions $\tau \in \mathcal{T}$.
\end{definition}

It is easily shown by means of induction that if $p$ is an invariant of a
lasso $L$, then for every execution $\sigma$ of $L$ and every
valuation $\nu \in \sigma$, we have $\nu \models p = 0$.

\begin{example}\label{ex-product2}
\autoref{ex-product1} calculates the product $s$ of the two input
values $x_0$ and $y_0$ by repeated addition. The polynomial lasso program $L$
is partially correct with respect to the post condition $s = x_0 y_0$
and it is easy to check that $s + x_0 y - x_0 y_0 = 0$ is an invariant
of $L$.
\end{example}


\section{Polynomial Loop Invariants}
\label{sec-invariants}

In this section, we extend
the approach for discovering loop invariants for polynomial lasso programs
introduced by Sankaranarayanan, Sipma and
Manna~\cite{Sankaranarayanan04}. We define a
weakened form of what they call \emph{polynomial consecution}.  We
prove that under some restrictions, this is a complete approach for
invariants over the complex numbers.  The results established in this
section will then be applied to program synthesis in
\autoref{sec-synthesis}.

The first lemma relieves us in certain cases from the potentially very
expensive computation of a Gröbner basis for the loop transitions.  
Specifically, for a
deterministic transition $\tau$, division by the Gröbner basis of 
$\tau$ is equivalent to substitution of the primed variables
according to the update statements.
\begin{lemma}\label{lem-deterministic-lasso}
Let $\tau$ be a deterministic transition with at most one guard
polynomial $h$ and updates $x_i' - f_i(V)$ that have denominator $1$.
If $x_i' \succ x_j$ in the monomial ordering for all $i$ and $j$, then
the set $G = \{ h(V) \} \cup \{ x_i' - f_i(V) \,|\, 1 \leq i \leq n
\}$ is a Gröbner basis of the ideal $\left<\tau\right>$.
\end{lemma}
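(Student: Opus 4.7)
The plan is to apply Buchberger's criterion, exploiting the fact that all pairs of generators in $G$ will turn out to have coprime leading monomials. First, I would observe that by the form of a deterministic transition in \autoref{def-lasso}, the assertion $\tau$ is literally the conjunction $h(V) = 0 \land \bigwedge_i (x_i' - f_i(V)) = 0$ (since the denominators $g_i$ equal $1$), so $G$ is by definition a generating set of $\left<\tau\right>$. The remaining task is to show it is a Gröbner basis under the given monomial order.

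Next I would pin down the leading monomials. Because $x_i' \succ x_j$ for every primed $x_i'$ and every unprimed $x_j$, and $f_i \in \reals[V]$ involves only unprimed variables, the leading monomial of $x_i' - f_i(V)$ is simply $x_i'$. The polynomial $h \in \reals[V]$ is a polynomial in the unprimed variables, so its leading monomial lies entirely in $V$. Hence every leading monomial of a generator is either a single primed variable $x_i'$ or a monomial in the unprimed variables.

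The key step is then the pairwise coprimality check. For two update generators $x_i' - f_i$ and $x_j' - f_j$ with $i \neq j$, the leading monomials $x_i'$ and $x_j'$ involve distinct variables and are therefore coprime. For the pair $h$ and any $x_i' - f_i$, the leading monomials are coprime because $x_i'$ does not appear in the leading monomial of $h$. By the standard corollary of Buchberger's criterion (if $\mathrm{lm}(p)$ and $\mathrm{lm}(q)$ are coprime then $S(p,q)$ reduces to $0$ modulo $\{p,q\}$), every S-polynomial among the generators of $G$ reduces to zero, so $G$ is a Gröbner basis of $\left<\tau\right>$.

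I do not expect a real obstacle here: the whole argument rests on the coprime-leading-term shortcut, and the hypothesis $x_i' \succ x_j$ is precisely what makes the updates' leading terms single primed variables. The only minor care point is the phrase \emph{at most one} guard polynomial, so I would explicitly note that if there is no guard then $G$ reduces to the update generators and the same (simpler) argument applies, and that restricting to one guard is what keeps all pairwise leading-term checks coprime (several guards would not generally enjoy coprime leading monomials among themselves).
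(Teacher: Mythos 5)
Your proof is correct and follows essentially the same route as the paper's: both verify Buchberger's S-pair criterion for the generators of $G$, relying on the fact that the ordering makes $\mathrm{lm}(x_i'-f_i)=x_i'$. The only difference is cosmetic --- the paper reduces the S-polynomials $S(x_i'-f_i,\,x_j'-f_j)$ and $S(x_i'-f_i,\,h)$ to zero by explicit division, whereas you invoke the standard coprime-leading-monomial corollary to dispose of them at once.
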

\ignore{
\begin{proof}
A set of polynomials $G = \{ p_1, \ldots, p_n \}$ is a Gröbner basis
iff for all $i \neq j$, $\gbdiv{S(p_i, p_j)}{G} = 0$.
\begin{align*}
S(x_i' - f_i, x_j' - f_j)
= x_i' f_j - x_j' f_i
\rightarrow_G - x_j' f_i + f_i f_j
\rightarrow_G 0.
\end{align*}
Let $h = \lt(h) + h'$, then
\begin{align*}
S(x_i' - f_i, h)
  = - x_i' h' - \lt(h) f_i
  \rightarrow_G - f_i h' - \lt(h) f_i
  = - h f_i
\rightarrow_G 0. \quad\qed
\end{align*}
\end{proof}
\endignore}

For the remainder of this paper, let $L = (V, \stem, \mathcal{T},
\exit)$ be a fixed pseudo-deterministic polynomial lasso program.
We will now define a sufficient, and under some assumptions also
necessary, condition for a template polynomial to be an invariant of $L$.

\begin{definition}[Invariance Condition]\label{def-invariance-condition}
For each transition $\tau \in \mathcal{T}$, let $q_\tau$ be any 
common multiple of the denominators of the update statements of
$\tau$. 
(In particular, $q_\tau$ can be the product of all denominators.)
Let $\Psi$ be a template polynomial over $(A,
V)$ of total degree $d$.  Let $s(V)$ be the generator of $\exit$ if it
has only one generator and $1$ otherwise.
The \emph{invariance
condition $\inc(L,\Psi)$ of $L$ for $\Psi$}
is the conjunction of
\begin{align*}
\gbdiv{\Psi(V)}{\stem} &= 0, \\
\gbdiv{q_\tau(V)^d \cdot s(V) \cdot \Psi(V')}{\tau} &= \Phi_\tau(V)
\cdot \Psi(V), \text{ for all } \tau \in \mathcal{T},
\end{align*}
where
the polynomials 
$\Phi_\tau$
are generic template polynomials over $(B_\tau, V)$ whose degrees
are bounded by the result of the division $\gbdiv{q_\tau(V)^d \cdot
  s(V) \cdot \Psi(V')}{\tau}$ and
$B_\tau$ are new disjoint sets of template variables.
\end{definition}

The variables $V$ and $V'$ are universally quantified in the
invariance condition, whereas the variables $A$ and $(B_\tau)_{\tau
  \in \mathcal{T}}$ are existentially quantified.  By the Zero
Polynomial Theorem \ref{thm-polynomial-zero}, the equations in the
invariance condition hold for all valuations on $V \cup V'$ if and
only if all the coefficients of the polynomials are identical to zero.
Therefore the variables $V$ and $V'$ can be removed from the
invariance condition yielding a constraint on the variables $A$ and
$(B_\tau)_{\tau \in \mathcal{T}}$.

\begin{remark}\label{rem-ic-components}
The invariance condition is designed to allow completeness in a wide variety of cases.
We provide some intuition for its components below, 
but for details
the reader is referred to the proof of \autoref{thm-completeness}.
\begin{itemize}
\item The result of the division $\gbdiv{q_\tau(V)^d \cdot s(V) \cdot \Psi(V')}{\tau}$ may not yield $\Psi(V)$, but rather some multiple
  of $\Psi(V)$. Hence, we have the generic template
  polynomial $\Phi_\tau$ in the invariance condition.
\item If an update statement, say $x_i'g_i-f_i$, in $\tau$ contains 
  a nontrivial denominator $g_i$, then 
  we may not be able to remove $x_i'$ from $\Psi(V')$ 
  by division on $\tau$.
  Since every monomial in $\Psi(V')$ contains at most $d$ primed variables,
  therefore multiplying $\Psi(V')$ with the polynomial $q_\tau(V)^d$ guarantees that division by $\tau$ will eliminate all primed variables.
\item When the exit condition $s(V) = 0$ holds, we do not need $\Psi$ to be inductive.
 Hence, we use the product $\Psi(V') \cdot s(V)$, which encodes that $\Psi$ holds in the next state \emph{or} the exit condition is satisfied.
\item If the exit condition is generated by more than one polynomial,
we cannot use this trick for all generators, thus loosing completeness.
For simplicity, we set $s = 1$ in those cases, but selecting one of the exit condition's generators as $s$ will make the condition more complete (but also more complex).
\end{itemize}
\end{remark}

\begin{remark}\label{rem-ssm-comparison}
The invariance condition in \autoref{def-invariance-condition} is more general than the condition used by Sankaranarayanan et al.~\cite{Sankaranarayanan04}.
They use the following inductiveness property:
\[
\gbdiv{\Psi(V')}{\tau} - \lambda \cdot \gbdiv{\Psi(V)}{\tau} = 0,
\]
where $\lambda$ is a real-valued variable.
This not only restricts $\Phi_\tau$ to a template of degree $0$, it also omits
the additions we have discussed in \autoref{rem-ic-components}.
\end{remark}

\begin{example}\label{ex-product3}
In order to state the invariance condition for \autoref{ex-product1},
we first fix a template polynomial $\Psi$ over $V$.  The general
second-degree template polynomial over $V$ is the following.
\begin{align*}
\Psi(V) &= a_0 x_0^2 + a_1 y_0^2 + a_2 y^2 + a_3 s^2
+ a_4 x_0 y_0 + a_5 x_0 y + a_6 x_0 s \\
&+ a_7 y_0 y + a_8 y_0 s + a_9 y s
+ a_{10} x_0 + a_{11} y_0 + a_{12} y + a_{13} s + a_{14}
\end{align*}
The invariance condition $\inc(L, \Psi)$ is given by the following
equations.
\begin{align*}
0 &= a_0 x_0^2 + (a_1 + a_2 + a_7) y^2 + (a_4 + a_5) x_0 y + a_{10} x_0
   + (a_{11} + a_{12}) y + a_{14} \\
0 &= (a_0 + a_3 + a_6 - b a_0) x_0^2 y + (a_1 - b a_1) y_0^2 y
   + (a_2 - b a_2) y^3 + (a_3 - b a_3) y s^2 \\
  &+ (a_4 + a_8 - b a_4) x_0 y_0 y + (a_5 + a_9 - b a_5) x_0 y^2
   + (a_6 + 2a_3 - b a_6) x_0 s y \\
  &+ (a_7 - b a_7) y_0 y^2 + (a_8 - b a_8) y_0 y s + (a_9 - b a_9) y^2 s \\
  &+ (a_{10} + a_{13} - a_9 - a_5 - b a_{10}) x_0 y
   + (a_{11} - a_7 - b a_{11}) y_0 y \\
  &+ (a_{12} - 2 a_2 - b a_{12}) y^2 + (a_{13} - a_9 - b a_{13}) y s
   + (a_{14} - b a_{14}) y
\end{align*}
Here, $\Phi_\tau(V) = b \cdot y$ is the generic template polynomial over
$B_\tau = \{ b \}$ of degree $0$ multiplied with $y$, the generator of
$\exit$ (for simplicity of presentation, we abstained from using a generic template polynomial for $\Phi_\tau$).
By \autoref{thm-polynomial-zero}, these two equalities yield
21 equations which are linear after assigning a value to $b$.
The assignment $\alpha: A \cup B_\tau \to \reals$ given by the
following table is a solution to the invariance condition $\inc(L,
\Psi)$.
\begin{center}
\begin{tabular}{r@{\hskip 2mm}|@{\hskip 2mm}ccccccccccccccccc}
         & $b$ & $a_0$ & $a_1$ & $a_2$ & $a_3$ & $a_4$ & $a_5$ & $a_6$ & $a_7$
& $a_8$ & $a_9$ & $a_{10}$ & $a_{11}$ & $a_{12}$ & $a_{13}$ & $a_{14}$ \\
$\alpha$ & $1$ & $0$ & $0$ & $0$ & $0$ & $-1$ & $1$ & $0$ & $0$ & $0$ & $0$
& $0$ & $0$ & $0$ & $1$ & $0$
\end{tabular}
\end{center}
This yields the loop invariant $\tilde\alpha(\Psi) = s + x_0 y - x_0 y_0$
from \autoref{ex-product2}.
\end{example}

\begin{theorem}[Soundness]\label{thm-soundness}
If $\alpha: A \cup \bigcup_{\tau \in \mathcal{T}} B \to \reals$ is
an assignment for the template variables that is a solution to the
invariance condition $\inc(L, \Psi)$, then $\tilde\alpha(\Psi)$ is an
invariant of $L$.
\end{theorem}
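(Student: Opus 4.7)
Let $p := \tilde\alpha(\Psi)$ denote the polynomial obtained by instantiating the template with the given assignment $\alpha$. To show $p$ is an invariant of $L$, by \autoref{def-invariant} we must verify both the initiation condition $\stem \models p(V) = 0$ and the consecution condition $p(V) = 0 \land \tau(V, V') \land \neg\exit \models p(V') = 0$ for each transition $\tau \in \mathcal{T}$. The plan is to read both conditions directly off the two parts of $\inc(L,\Psi)$ via the Gröbner basis characterization of ideal membership, and then use \autoref{lem-nullstellensatz-converse} to translate ideal membership into semantic entailment over $\reals$.

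For initiation, I would apply $\tilde\alpha$ to the first equation of the invariance condition to obtain $\gbdiv{p(V)}{\stem} = 0$. By property~(3) of Gröbner bases in the preliminaries, this means $p(V) \in \langle\stem\rangle$, and \autoref{lem-nullstellensatz-converse} immediately yields $\stem \models p(V) = 0$.

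For consecution, applying $\tilde\alpha$ to the second equation yields, for each transition $\tau$,
\[
\gbdiv{q_\tau(V)^d \cdot s(V) \cdot p(V')}{\tau} \;=\; \tilde\alpha(\Phi_\tau)(V)\cdot p(V),
\]
so $q_\tau(V)^d \cdot s(V) \cdot p(V') - \tilde\alpha(\Phi_\tau)(V)\cdot p(V) \in \langle\tau\rangle$. By \autoref{lem-nullstellensatz-converse}, for every pair of valuations $(\nu,\nu')$ satisfying $\tau(V,V')$,
\[
q_\tau(\nu)^d \cdot s(\nu) \cdot p(\nu') \;=\; \tilde\alpha(\Phi_\tau)(\nu)\cdot p(\nu).
\]
Now assume additionally that $p(\nu)=0$ and $\nu \models \neg\exit$; then the right-hand side vanishes, giving $q_\tau(\nu)^d \cdot s(\nu) \cdot p(\nu') = 0$. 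The core observation is that the two prefactors are nonzero under $\neg\exit$: by \autoref{def-lasso}, each denominator $g_i$ of $\tau$ satisfies $\neg\exit \models g_i(V)\neq 0$, so any common multiple $q_\tau$ is likewise nonzero under $\neg\exit$; and $s$ is either $1$ (trivially nonzero) or the sole generator of $\exit$, in which case $\neg\exit \models s(V)\neq 0$ by definition. Dividing out these nonzero factors forces $p(\nu')=0$, establishing the consecution condition.

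The argument is largely routine once the shape of the invariance condition is understood; the only delicate point is the justification that $q_\tau(V)^d$ and $s(V)$ can legitimately be cancelled, which is precisely where the design choices highlighted in \autoref{rem-ic-components} pay off. In particular, the restriction that either $\exit$ has a single generator or $s=1$ is exactly what makes the cancellation step go through soundly; if one allowed $s$ to be a product of several generators of $\exit$, this cancellation step would still work under $\neg\exit$, but care would be needed to ensure the multiplier $q_\tau(V)^d \cdot s(V)$ remains nonzero outside $\exit$, which is why the definition is stated conservatively.
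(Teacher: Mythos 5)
Your proof is correct and follows essentially the same route as the paper: initiation via Gr\"obner-basis membership in $\langle\stem\rangle$ plus \autoref{lem-nullstellensatz-converse}, and consecution by reading the second part of $\inc(L,\Psi)$ as membership of $q_\tau^d\, s\, p(V') - \tilde\alpha(\Phi_\tau)\, p(V)$ in $\langle\tau\rangle$, evaluating under $\tau \wedge p=0 \wedge \neg\exit$, and cancelling the nonzero factors $q_\tau^d$ and $s$ (the paper merely inserts an intermediate appeal to \autoref{lem-ideal} to phrase this as $q_\tau^d s\, p(V') \in \langle\tau, p\rangle$, which is equivalent to your direct evaluation). The only inaccuracy is the closing aside: if $s$ were a product of several generators of $\exit$, then $\neg\exit$ only guarantees that \emph{some} generator is nonzero, so $s$ itself could vanish and the cancellation would in fact fail --- but this does not affect the proof of the stated theorem.
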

\begin{proof}
$\gbdiv{\tilde\alpha(\Psi)}{\stem} = 0$, hence $\tilde\alpha(\Psi) \in
\left<\stem\right>$, and therefore $\stem \models \tilde\alpha(\Psi) = 0$ according to
\autoref{lem-nullstellensatz-converse}.
By the premise,
\begin{align*}
q_\tau(V)^d \, s(V) \,\tilde\alpha(\Psi)(V') - \tilde\alpha(\Phi_\tau)(V)
\, \tilde\alpha(\Psi)(V) \in \left<\tau\right>
\end{align*}
for all $\tau \in \mathcal{T}$, therefore $q_\tau(V)^d s(V)
\tilde\alpha(\Psi)(V') \in \left< \tau, \tilde\alpha(\Psi)(V) \right>$
by \autoref{lem-ideal}, and from
\autoref{lem-nullstellensatz-converse} follows
\[
\tau(V, V') \land \tilde\alpha(\Psi)(V) = 0 \models q_\tau(V)^d \cdot s(V)
\cdot \tilde\alpha(\Psi)(V') = 0.
\]
Since $q_\tau$ is a common multiple of denominators of updates
in $\tau$ and $\neg \exit$ holds before any transition $\tau$, it
follows that $\neg \exit \models q_\tau(V) \neq 0$ by
\autoref{def-lasso}.  With $\neg \exit \models s(V) \neq 0$
we conclude that
\[
\tau(V, V') \land \tilde\alpha(\Psi)(V) = 0 \land \neg\exit
  \models \tilde\alpha(\Psi)(V') = 0. \quad\qed
\]
\end{proof}

A criterion for the method's completeness is given by the
following theorem.  The Nullstellensatz is applicable only when
one considers the theory of complex numbers, which in general
admits a proper subset of loop invariants.
Furthermore, the Nullstellensatz demands all involved ideals be
radical ideals~\cite{Cox91}.

\begin{theorem}[Completeness in $\complex$]\label{thm-completeness}
Let $L = (V, \stem, \mathcal{T}, \exit)$ be a polynomial lasso program
with the complex loop invariant\footnote{The assertions of
  \autoref{def-invariant} hold in the theory of the complex numbers.}
$p \in \reals[V]$.
If $\alpha: A \to \reals$ is a valuation such that
$\tilde\alpha(\Psi) = p$, then $\alpha$ can be extended to a solution
to the invariance condition if the following additional premises are
met.
\begin{enumerate}
\item\label{itm-completeness-l-det} The lasso $L$ is
  pseudo-deterministic.
\item\label{itm-completeness-radical} The ideal $\left<\stem\right>$ and the ideal $\langle p\rangle$
 are both radical ideals.
\item\label{itm-completeness-exit-principal} The ideal $\left<\exit\right>$ is
  generated by a single polynomial $s \in \reals[V]$.
\item\label{itm-completeness-extra} 
  The guard $h = 0$ of each transition $\tau\in\mathcal{T}$ is equivalent to $\mathit{True}$
  (i.e., $h$ is $0$).
\item\label{itm-completeness-ordering} The monomial ordering $\succ$
  is lexicographic and $x_i' \succ x_j$ for all $i$, $j$.
\end{enumerate}
\end{theorem}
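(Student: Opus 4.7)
The plan is to construct real-valued assignments for the template variables $B_\tau$ of each $\Phi_\tau$ and then verify the two clauses of the invariance condition. For the stem clause, since $p=\tilde\alpha(\Psi)$ is a complex invariant we have $\complex\models\stem\rightarrow p=0$; because $\left<\stem\right>$ is a radical ideal by premise~\ref{itm-completeness-radical}, Hilbert's Nullstellensatz yields $p\in\left<\stem\right>$ and hence $\gbdiv{p}{\stem}=0$, as required.

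For the transition clause, I would first translate invariance under $\tau$ into an ideal containment. Because $\exit$ is generated by the single polynomial $s$ (premise~\ref{itm-completeness-exit-principal}), the condition $\neg\exit$ is equivalent to $s(V)\neq 0$, so invariance of $p$ over $\complex$ yields $\complex\models \tau\wedge p(V)=0\rightarrow s(V)\,p(V')=0$. Assuming $\left<\tau,p(V)\right>$ is a radical ideal, Nullstellensatz produces $s(V)\,p(V')\in\left<\tau,p(V)\right>$, and \autoref{lem-ideal} then supplies $t\in\reals[V,V']$ with $s(V)\,p(V')-t\cdot p(V)\in\left<\tau\right>$. Multiplying through by $q_\tau^d$ and reducing both sides modulo a Gröbner basis of $\tau$ gives
\[
\gbdiv{q_\tau(V)^d\,s(V)\,p(V')}{\tau}\;=\;p(V)\cdot\gbdiv{q_\tau(V)^d\,t}{\tau},
\]
since $p(V)$ is already reduced (it contains no primed variables) and the Gröbner basis of $\tau$ under the lex ordering with $x_i'\succ x_j$ (premise~\ref{itm-completeness-ordering}) has leading terms in the primed variables. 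The factor $q_\tau^d$ plays exactly the role highlighted in \autoref{rem-ic-components}: it clears the denominators $g_i$ so that the remainder lies in $\reals[V]$. Setting $\Phi_\tau:=\gbdiv{q_\tau^d\,t}{\tau}$ and reading off its coefficients as the $B_\tau$-valuation (legitimate by the Zero Polynomial Theorem) completes the extension of $\alpha$.

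I expect the main obstacle to be justifying that $\left<\tau,p(V)\right>$ is a radical ideal, which is the nontrivial premise feeding the Nullstellensatz step above. Here I would exploit premises~\ref{itm-completeness-l-det} and \ref{itm-completeness-extra}: a guardless, pseudo-deterministic transition describes essentially a graph, so that the quotient $\complex[V,V']/\left<\tau\right>$ is isomorphic to $\complex[V]$ (possibly after localizing at the denominators $g_i$, whose non-vanishing is guaranteed by $\neg\exit$), and under this identification $\left<\tau,p(V)\right>$ corresponds to $\left<p\right>\subseteq\complex[V]$, which is radical by premise~\ref{itm-completeness-radical}. A secondary technicality is verifying that the constructed $\Phi_\tau$ respects the degree bound of \autoref{def-invariance-condition} and that its coefficients are real; reality is automatic because $\tau$, $p$, $s$ and $q_\tau$ all have real coefficients, while the degree bound should follow by inspecting the degrees appearing in the Gröbner reduction itself.
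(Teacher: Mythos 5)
Your treatment of the stem clause matches the paper exactly, and your overall strategy for the transition clause (Nullstellensatz to turn the semantic consecution into an ideal membership, then \autoref{lem-ideal} and reduction by $\tau$ to read off $\Phi_\tau$) is also the paper's. But there is a genuine gap at the pivotal step: you invoke the radical-ideal form of the Nullstellensatz on $\left<\tau,p(V)\right>$ to conclude $s(V)\,p(V')\in\left<\tau,p(V)\right>$, and radicality of $\left<\tau,p(V)\right>$ is \emph{not} among the theorem's premises. Your proposed repair --- that $\complex[V,V']/\left<\tau\right>\cong\complex[V]$ ``possibly after localizing at the denominators $g_i$'' and that $\left<\tau,p\right>$ thereby corresponds to the radical ideal $\left<p\right>$ --- does not go through when the denominators are nontrivial. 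Localizing at the $g_i$ indeed identifies the quotient with $\complex[V][g^{-1}]$ and the extension of $\left<\tau,p\right>$ with $\left<p\right>\complex[V][g^{-1}]$, but radicality of that extension says nothing about radicality of $\left<\tau,p\right>$ back in $\complex[V,V']$: the contraction of the extended ideal is a saturation that can strictly contain $\left<\tau,p\right>$, and it is precisely in this discrepancy (when some multiple of a denominator lies in $\left<p\right>$) that the difficulty sits. Appealing to ``$\neg\exit$ guarantees $g_i\neq 0$'' does not help here, because the ideal-theoretic statement must hold without that semantic side condition.

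The paper avoids this hypothesis entirely. It applies the \emph{strong} Nullstellensatz, obtaining only $\bigl(q_\tau(V)^d\,s(V)\,p(V')\bigr)^k\in\left<p,\tau\right>$ for some $k$, reduces by $\tau$ to land in $\left<p,\tau\right>\cap\reals[V]$, and then splits into two cases: if $\left<p,\tau\right>\cap\reals[V]=\left<p\right>$, radicality of $\left<p\right>$ (which \emph{is} Premise~\ref{itm-completeness-radical}) strips the exponent $k$ and yields $s\,r\in\left<p\right>$; if not, the only way the intersection can grow is that a multiple of the denominators rewrites to $0$ by $p$, which forces $s\in\left<p\right>$ and gives the same conclusion. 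To fix your argument you would either need to add radicality of $\left<\tau,p\right>$ as a hypothesis (weakening the theorem), or restructure along the paper's lines so that the only radicality you use is that of $\left<p\right>$ and $\left<\stem\right>$. A smaller secondary issue: even granting the membership, your $t$ from \autoref{lem-ideal} lives in $\reals[V,V']$ and may contain primed variables of degree exceeding $d$, so $q_\tau^d$ need not clear all denominators arising in the reduction of $t$; the paper sidesteps this by reducing $q_\tau^d\,s\,p(V')$ (whose primed degree is exactly that of $p$) rather than a product involving $t$.
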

\begin{proof}
The polynomial $p$ is a loop invariant of $L$, so by
\autoref{def-invariant},
\begin{align}
\stem &\models_\complex p(V) = 0 \text{ and} \label{eq-comp-inv-init} \\
p(V) = 0, \tau(V, V'), \neg\exit &\models_\complex p(V') = 0
\text{ for all } \tau \in \mathcal{T}. \label{eq-comp-inv-cons}
\end{align}
The ideal $\left<\stem\right>$ is a radical ideal by Premise~\ref{itm-completeness-radical}, so according to 
\hyperref[nullstellensatz]{Hilbert's Nullstellensatz},
Equation~\eqref{eq-comp-inv-init} implies
$p \in \left<\stem\right>$; and hence, $\alpha$ satisfies the first part of 
the invariance condition (IC).

To prove that $\alpha$ can be extended to satisfy the second part of IC,
note that Equation~\eqref{eq-comp-inv-cons}, combined with 
Premise~\ref{itm-completeness-exit-principal}, yields
\begin{eqnarray*}
 p(V) = 0, \tau(V, V') &\models_\complex & s(V) p(V') = 0.
\end{eqnarray*}
Using the Nullstellensatz, for some positive number $k$, we have
\begin{eqnarray*}
{\big( q_\tau(V)^d s(V) p(V') \big)^k} & \in & \langle p, \tau\rangle.
\end{eqnarray*}
Since $h$ is $0$, normalizing by $\tau$ is equivalent to 
replacing primed variables
using the update expressions in $\tau$, and hence,
\begin{eqnarray*}
 s(V)^k r(V)^k  \in  \langle p,\tau\rangle,
 \quad \mbox{where } r(V) := \gbdiv{q_\tau(V)^d p(V')}{\tau}
\end{eqnarray*}
Note that $r(V)$ has no prime variables since 
Premise~\ref{itm-completeness-ordering} ensures
all prime variables are greater with respect to the monomial ordering $\succ$ than the unprimed variables.
Therefore, $s(V)^k r(V)^k  \in  \langle p,\tau\rangle\cap \reals[V]$.
Now, there are two cases.
\\
{\em{(Case 1):
$\langle p,\tau\rangle\cap \reals[V] = \langle p\rangle$.}}
Then, it follows that $s(V)^k r(V)^k  \in  \langle p\rangle$.
Since $\langle p\rangle$ is a radical ideal, we can infer
 $s(V) r(V) \in \langle p\rangle$
 and hence
 $\gbdiv{q_\tau(V)^d s(V) p(V')}{\tau}$ is a multiple of $p$.
 Hence, second part of IC is satisfied.
\\
{\em{(Case 2):
$\langle p,\tau\rangle\cap \reals[V] \neq \langle p\rangle$.}}
This is possible only if some multiple of the 
denominators rewrites to $0$ by $p$.
Hence, $p = 0$ implies $s(V) = 0$ (since $s\neq 0$ implies
that denominators are nonzero).
Since $\langle p\rangle$ is a radical ideal, 
it follows $s\in \langle p\rangle$, and hence 
 $s(V) r(V) \in \langle p\rangle$ --- as in (Case 1) above.
\qed
\end{proof}

It is important to emphasize that the generic template polynomial for
the invariant must have a sufficiently large degree to be able to
specialize to the loop invariant.  This is presumed in the
completeness statement.  We will now discuss the other premises of
\autoref{thm-completeness}.

Premise~\ref{itm-completeness-l-det} ensures that 
the division of $\Psi(V')$
on a transition $\tau$ removes all primed variables, since we multiplied with $q_\tau(V)^d$ in the invariance condition.
Premise~\ref{itm-completeness-radical} is a requirement by
\hyperref[nullstellensatz]{Hilbert's Nullstellensatz}.  
In order to write a disjunction of $\exit$ and a polynomial equality
as a product, $\exit$ must have a single generator; this is stated in
Premise~\ref{itm-completeness-exit-principal}.  
We will discuss relaxing Premise~\ref{itm-completeness-extra} below.
Finally, Premise~\ref{itm-completeness-ordering} assures that primed variables are
eliminated first, leaving only unprimed variables in appropriate
cases.  This is relevant because the right hand side $\Phi_\tau(V)
\cdot \Psi(V)$ in the invariant condition contains only unprimed
variables.

\begin{remark}
We can generalize the completeness result to also include the case
when guards of transitions are nontrivial and 
when a conjunction
$p_1 = 0 \wedge p_2 = 0$ is an inductive invariant, but neither
$p_1 = 0$ nor $p_2 = 0$ by itself is an inductive invariant.
This requires generalizing the second part of the
invariance condition.
Let $\Psi_1$ and $\Psi_2$ be the templates whose instantiation 
gives $p_1$ and $p_2$ respectively. Then,
for all $\tau$ in $\mathcal{T}$, and for $i = 1,2$,
$$
\gbdiv{q_\tau(V)^d \cdot s(V) \cdot \Psi_i(V')}{\tau} = 
\Phi_1(V) \cdot \Psi_1(V) + \Phi_2(V)\cdot \Psi_2(V) +
\Phi_3(V) \cdot h_\tau(V) 
$$
Note that $\Phi_1,\Phi_2,\Phi_3$ are different templates
for different $\tau$'s and different $i$'s.
As before, the degrees of the templates are bounded by the
degree of the left-hand side, and $d$ is the total degree of $\Psi_i$.
In the completeness theorem, we can now drop
Premise~\ref{itm-completeness-extra}, but replace
Premise~\ref{itm-completeness-radical} by the following generalization:
\begin{itemize}
  \item[\ref{itm-completeness-radical}] 
    The ideal $\stem$, and for all $\tau$,
    the ideals $\langle p_1,p_2,h_\tau\rangle$,
    where $h_\tau=0$ is the guard of $\tau$, are radical ideals.
    Moreover, $\{p_1,p_2,h_\tau\}$ is a GB of 
    $\langle p_1,p_2,h_\tau\rangle$.
\end{itemize}
The proof of the new completeness claim is a natural
generalization of the proof of Theorem~\ref{thm-completeness} above.
\qed
\end{remark}

Besides the five restrictions of \autoref{thm-completeness},
completeness does not extend to the field of real numbers due to the
requirements of Hilbert's Nullstellensatz.  The underlying problem is
illustrated by the following example.

\begin{example}\label{ex-real-ideals}
The formula $\varphi \equiv x_1^2 + x_2^2 = 0$ has $x_1 = x_2 = 0$ as
its only solution over the reals.  However, $x_1, x_2 \notin \left<
x_1^2 + x_2^2 \right>$, although $\left< x_1^2 + x_2^2 \right>$ is a
radical ideal and $\varphi \models_\reals x_1 = 0, x_2 = 0$.
\end{example}
Alternatively,  we could formulate our results using
{\em{real radical ideals}}~\cite{Neuhaus98}.

Because the invariance condition in general is a non-linear
constraint, solving it might be very difficult. General approaches for
solving non-linear constraints have worst case space requirements that
are doubly exponential in the size of the input.
However, non-linear constraint solving is an active field of research
and recently there have been some promising efforts to take the
practical cases away from their {\small{DOUBLE-EXPSPACE}} worst-case complexity
bound~\cite{Jovanovic12}.

Another approach for solving the invariance condition stems from the
observation that the invarianc condition becomes linear if an assignment
for the template variables $(B_\tau)_{\tau \in \mathcal{T}}$ is given.
One could use heuristics to find this assignment.
For instance, practical experience suggests that if a solution to a variable $b \in
B_\tau$ is $\lambda_b \in \reals$, then the factor $(b -
\lambda_b)$ occurs somewhere in the invariance condition.  Using
factors in the former form as an initial guess for the variables
$(B_\tau)_{\tau \in \mathcal{T}}$ linearizes the equations and thus
enables quick discovery of a solution in some cases.

In the special case that $\Phi_\tau(V) := \lambda$ is degree $0$ 
(also called constant consecution),  $\lambda$ can be found as an
eigenvalue of an appropriate transformer 
constructed by interpreting bounded
degree polynomials as finite-dimensional vector spaces~\cite{Rebiha}.


\section{Synthesis}
\label{sec-synthesis}

The technique for finding a loop invariant using the invariance
condition established in the previous section will now be used for
program synthesis.  Given a polynomial lasso program, some transition
updates can be parameterized by replacing them with template
polynomials.
The synthesis process will try to find a valuation of these template
variables while respecting some post condition.  The following
definition formalizes this concept.

\begin{definition}[Synthesis Problem]\label{def-synthesis-problem}
A \emph{synthesis problem} $S = (C, L, \post)$ consists of
\begin{itemize}
\item a set of \emph{synthesis variables} $C$,
\item a polynomial lasso program $L = (V, \stem, \mathcal{T}, \exit)$ where $\stem$ and $\tau \in \mathcal{T}$ contain template polynomials over $(C, V)$, and
\item a \emph{post condition} in form of an algebraic assertion $\post$ over $V$.
\end{itemize}

A \emph{solution to the synthesis problem $S$} is a valuation $\alpha:
C \to \reals$ such that the lasso $L_\alpha = (V,
\tilde\alpha(\stem), \tilde\alpha(\mathcal{T}), \exit)$ is partially
correct with respect to the post condition $\post$.
\end{definition}

\begin{example}\label{ex-product4}
Transforming $L$ from \autoref{ex-product1} to $L'$ by changing
the transition $\tau$ to
\[
y' = y - 1 \land s' = c_1 x_0 + c_2 y_0 + c_3 y + c_4 s + c_5
\]
gives rise to a synthesis problem $S = (C, L', \post)$ for $C = \{
c_1, c_2, c_3, c_4, c_5 \}$ and $\post \equiv s = x_0 y_0$.  A
solution to $S$ is $\alpha: c_1 \mapsto 1, c_2 \mapsto 0, c_3 \mapsto
0, c_4 \mapsto 1, c_5 \mapsto 0$ since $L_\alpha = L$ and $L$ is
partially correct with respect to $\post$ according to
\autoref{ex-product2}.
\end{example}

Our approach for solving the synthesis problem is based on the
technique from the previous section.  We will prove the partial
correctness of the synthesized lasso program.  The following lemma states that
synthesized polynomial lasso program will be partially correct.

\begin{lemma}[Synthesis Solution]\label{lem-synthesis-solution}
Let $S = (C, L, \post)$ be a synthesis problem, $\alpha: C \to
\reals$ be a valuation on the synthesis variables, and let $p$ be
an invariant for $L_\alpha$.  If $p = 0 \land \exit \models \post$, then
$L_\alpha$ is partially correct with respect to $\post$, i.e., $\alpha$ is a
solution to $S$.
\end{lemma}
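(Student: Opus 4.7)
The plan is to unpack the definitions of partial correctness, execution, and invariant, and then apply the hypothesis $p = 0 \land \exit \models \post$ to the final valuation of an arbitrary finite execution.

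First I would fix an arbitrary finite execution $\sigma = \nu_0 \nu_1 \ldots \nu_k$ of $L_\alpha$. By \autoref{def-correctness}, partial correctness of $L_\alpha$ with respect to $\post$ amounts to showing $\nu_k \models \post$ for every such $\sigma$. By condition~(3) of \autoref{def-lasso-semantics}, the last valuation satisfies the exit condition, so $\nu_k \models \exit$. It therefore suffices to establish $\nu_k \models p = 0$, because the hypothesis then yields $\nu_k \models \post$ directly.

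To show $\nu_k \models p = 0$, I would invoke the induction observation stated right after \autoref{def-invariant}: if $p$ is an invariant of $L_\alpha$, then every valuation occurring in an execution satisfies $p = 0$. For completeness, one can reprove it inline by induction on $i$: the base case $\nu_0 \models p = 0$ follows from condition~(1) of invariance together with $\nu_0 \models \stem$; for $0 \le i < k$, since $i+1 \le k$ means $\nu_i$ is not the last element of $\sigma$, condition~(3) of \autoref{def-lasso-semantics} gives $\nu_i \models \neg\exit$, and condition~(2) provides a transition $\tau$ with $\tau(\nu_i, \nu_{i+1})$, so condition~(2) of invariance yields $\nu_{i+1} \models p = 0$.

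Combining $\nu_k \models p = 0$ with $\nu_k \models \exit$ gives $\nu_k \models p = 0 \land \exit$, and applying the assumed entailment $p = 0 \land \exit \models \post$ produces $\nu_k \models \post$, as required. There is no real obstacle here: the statement is essentially bookkeeping and the only subtle point is the alignment between the guarded inductive step (which requires $\neg\exit$ at the source valuation) and the fact that $\neg\exit$ holds at every non-final valuation of a finite execution, which is exactly what the ``iff'' in condition~(3) of \autoref{def-lasso-semantics} guarantees.
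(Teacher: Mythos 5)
Your proof is correct and follows essentially the same route as the paper's: fix a finite execution, use the invariance of $p$ (via the induction remark after \autoref{def-invariant}) to get $\nu_k \models p = 0$, combine with $\nu_k \models \exit$ from \autoref{def-lasso-semantics}, and apply the hypothesis. The only difference is that you spell out the induction inline, which the paper leaves implicit.
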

\begin{proof}
Let $\sigma = \nu_0 \ldots \nu_k$ be a finite execution of $L_\alpha$.
According to the assumption, $p$ is an invariant of $L_\alpha$, so by
\autoref{def-invariant}, $\nu_i \models p = 0$ for all $0 \leq i \leq
k$.  By \autoref{def-lasso-semantics}, $\nu_k \models \exit$, therefore
$\nu_k \models p = 0 \land \exit$. According to the assumption, this
implies $\nu_k \models \post$, which proves the correctness of
$L_\alpha$. \qed
\end{proof}

To find a valuation for the synthesis variables, we define a
\emph{synthesis condition}.  The synthesis condition will constrain
the synthesis variables so that existence of a loop invariant $p$ that
implies the post condition is guaranteed; that is,
\begin{align}\label{eq-synthesis-post}
p = 0 \land \exit \models \post.
\end{align}
If $\post = \bigwedge_i \post_i = 0$, then the above is implied by
$\post_i \in \left< p, \exit \right>$ by
\autoref{lem-nullstellensatz-converse}.  However, computing the
Gröbner basis with respect to a template polynomial for $p$ is
extremely inefficient and potentially involves a huge number of case
splits.
But according to \autoref{lem-ideal}, we can equivalently write
\begin{align}\label{eq-synthesis-post2}
\post_i - t p \in \exit,
\end{align}
for some unknown $t \in \reals[V]$.  This enables us to rewrite
\eqref{eq-synthesis-post} in a way that only involves computing the
Gröbner basis for non-template polynomials.

\begin{example}\label{ex-product5}
Let $S$ be the synthesis problem from \autoref{ex-product4}.  We use
the loop invariant $p = s + x_0 y - x_0 y_0$ from
\autoref{ex-product2} in \autoref{lem-synthesis-solution} to show that
$\alpha$ is a solution to $S$ by checking
\[
s + x_0 y - x_0 y_0 = 0 \land y = 0 \models s = x_0 y_0,
\]
or instead that for $t = 1$,
\[
(s - x_0 y_0) - t (s + x_0 y - x_0 y_0) \in \exit.
\]
\end{example}

\begin{definition}[Synthesis Condition]\label{def-synthesis-condition}
Let $S = (C, L, \bigwedge_{i=1}^m \post_i(V) \!\!=\!\! 0)$ be a synthesis
problem, let $\Psi$ be a template polynomial over $(A, V)$ and for all
$0 \leq i \leq m$, let $\Omega_i$ be a template polynomial over
$(D_i, V)$.  The \emph{synthesis condition},
$\syc(S, \Psi, \{\Omega_i \mid 0 \leq i \leq m \})$,
of $S$ is the formula
\begin{align*}
\inc(L, \Psi) \land \bigwedge_i \gbdiv{\post_i(V) - \Omega_i(V)
  \Psi(V)}{\exit} = 0
\end{align*}
\end{definition}

Following the same argument as for the invariant condition,
the synthesis condition simplifies to a conjunction 
of non-linear equations in the
variables $A \cup C \cup (\bigcup_{\tau \in \mathcal{T}} B_\tau) \cup
(\bigcup_{i=1}^m D_i)$.  Utilizing an SMT solver, a solution to this
constraint can be obtained that is then used to instantiate the template
polynomials in the loop invariant and polynomial lasso program.
According to the next theorem, this yields a correct program instance.

Motivated by \autoref{ex-product5}, we may set $\Omega_i = 1$ in the
synthesis condition.  In this case the constraint $\gbdiv{\post_i(V) -
  \Psi(V)}{\exit} = 0$, the synthesis condition's constraint
corresponding to the post condition, is linear.  Using this
observation we can use linear methods to eliminate some variables from
the constraint system, thus simplifying it.  The same trick also
applies to the constraint $\gbdiv{\Psi(V)}{\stem} = 0$ in the
invariance condition if the program stem does not contain
any synthesis variables $C$.

Because the coefficients of some of the polynomials in $L$ contain
template variables, special care must be taken when computing a
Gröbner basis for $\stem$ or $\tau \in \mathcal{T}$.  Every division
by some term containing a variable demands a case split on whether
this term evaluates to zero.  One way of circumventing this problem is
to compute a Gröbner basis where the underlying algebraic structure for polynomial coefficients is the ring of parameter polynomials $\reals[A]$.
This requires a slightly modified division algorithm~\cite{Adams94,BachmairTiwari97:RTAsmall}.

\begin{theorem}[Synthesis Soundness]\label{def-synthesis-soundness}
If $\alpha: A \cup (\bigcup_{\tau \in \mathcal{T}} B_\tau) \cup C \cup
(\bigcup_i D_i) \rightarrow \reals$ is an assignment for the
template variables that models the synthesis condition, then
$L_\alpha$ is partially correct with respect to the post condition
$\post$ and $\tilde\alpha(\Psi)$ is an invariant of $L_\alpha$.
\end{theorem}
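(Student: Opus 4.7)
The plan is to split the synthesis condition into its two conjuncts and treat each part separately, then glue them together with \autoref{lem-synthesis-solution}. From the first conjunct, $\alpha$ satisfies $\inc(L,\Psi)$, and I want to invoke \autoref{thm-soundness} to conclude that $\tilde\alpha(\Psi)$ is an invariant of $L_\alpha$. From the second conjunct, I want to derive the implication $\tilde\alpha(\Psi)=0 \land \exit \models \post$ required by \autoref{lem-synthesis-solution}.

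For the first part, note that the invariance condition $\inc(L,\Psi)$ consists of polynomial equations over $\reals[A\cup C][V\cup V']$ whose left and right sides are built up by the polynomial operations of addition, multiplication, and Gröbner-basis-style division on $\stem$ and on each $\tau$. The map $\tilde\alpha$ is a ring homomorphism on the coefficient ring $\reals[A\cup C]$, so it commutes with these operations (using the parametric Gröbner basis algorithm alluded to in the paragraph preceding the theorem, the division step also commutes with $\tilde\alpha$). Applying $\tilde\alpha$ to every equation in $\inc(L,\Psi)$ therefore produces precisely the equations of $\inc(L_\alpha, \tilde\alpha(\Psi))$. Hence $\tilde\alpha$ (restricted to $A \cup \bigcup_\tau B_\tau$) is a solution to $\inc(L_\alpha, \tilde\alpha(\Psi))$, and \autoref{thm-soundness} yields that $\tilde\alpha(\Psi)$ is an invariant of $L_\alpha$.

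For the second part, the assumption $\gbdiv{\post_i(V)-\Omega_i(V)\Psi(V)}{\exit}=0$ means $\post_i - \Omega_i \Psi \in \langle \exit\rangle$. Applying the same homomorphism $\tilde\alpha$ (which fixes $\post_i$ and $\exit$ since they contain no template variables) gives $\post_i - \tilde\alpha(\Omega_i)\,\tilde\alpha(\Psi) \in \langle \exit\rangle$. By \autoref{lem-nullstellensatz-converse}, $\exit \models \post_i(V) = \tilde\alpha(\Omega_i)(V)\cdot\tilde\alpha(\Psi)(V)$, and so $\tilde\alpha(\Psi)=0 \land \exit \models \post_i = 0$ for every $i$. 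Conjoining over $i$ yields $\tilde\alpha(\Psi)=0 \land \exit \models \post$. Combining the two parts via \autoref{lem-synthesis-solution} (instantiated with $p := \tilde\alpha(\Psi)$) delivers the partial correctness of $L_\alpha$.

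The only subtle step is the commutation claim in the first paragraph: division on $\stem$ and on $\tau$ is carried out with coefficients in $\reals[A\cup C]$ rather than $\reals$, so one has to argue that the resulting quotient and remainder specialize correctly under $\tilde\alpha$. This is exactly the point already mentioned in the paragraph preceding the theorem statement (referencing the parametric division algorithm of~\cite{Adams94,BachmairTiwari97:RTAsmall}); modulo that observation, the whole argument is a mechanical assembly of \autoref{thm-soundness}, \autoref{lem-nullstellensatz-converse}, and \autoref{lem-synthesis-solution}. \qed
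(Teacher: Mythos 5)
Your proposal is correct and follows essentially the same route as the paper's proof: invoke \autoref{thm-soundness} on the first conjunct to get the invariant, turn the second conjunct into $\tilde\alpha(\Psi)=0\land\exit\models\post$ via \autoref{lem-nullstellensatz-converse} (the paper routes this through \autoref{lem-ideal} first, but that is cosmetic), and finish with \autoref{lem-synthesis-solution}. Your explicit treatment of why $\tilde\alpha$ commutes with the parametric Gröbner-basis division is a point the paper's one-line appeal to \autoref{thm-soundness} leaves implicit, so it is a welcome addition rather than a deviation.
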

\begin{proof}
By \autoref{thm-soundness}, $\tilde\alpha(\Psi)$ is an invariant of
$L_\alpha$.  By definition, $\post_i(V) - \tilde\alpha(\Omega_i)(V)
\tilde\alpha(\Psi)(V) \in \left<\exit\right>$, therefore $\post_i \in \left<
\tilde\alpha(\Psi), \exit \right>$ for all $i$ according to
\autoref{lem-ideal}. By \autoref{lem-nullstellensatz-converse},
$\tilde\alpha(\Psi) \land \exit \models \post_i$ for all $i$, hence
$\tilde\alpha(\Psi) \land \exit \models \post$.
\autoref{lem-synthesis-solution} ensures that this implies that
$L_\alpha$ is partially correct. \qed
\end{proof}

The synthesis process is not complete, even when the restrictions of
\autoref{thm-completeness} hold.  The reason for this is the
polynomial $t$ in \eqref{eq-synthesis-post2}: we are using templates
$\Omega_i$ for $t$, but a priori we have no upper bound on the degree of 
$t$.
In practice, a template of degree $0$ might be sufficient, as in our examples (see \autoref{sec-benchmark}).


\section{Termination}
\label{sec-termination}

A solution to the synthesis problem guarantees partial correctness of the
synthesized program; however, termination is not guaranteed. Even if
the synthesized program terminates, it might be highly inefficient,
going through unnecessarily many loop iterations.

\begin{example}\label{ex-product6}
If one extends $L'$ from \autoref{ex-product4} to $L''$ by
changing $\tau$ to
\[
y' = c_6 y + c_7 \land s' = c_1 x_0 + c_2 y_0 + c_3 y + c_4 s + c_5
\]
this yields a synthesis problem $S' = (C', L'', \post)$ with $C' = C
\cup \{ c_6, c_7 \}$. Possible solutions to $S'$ include the
valuations $\alpha_\lambda: c_1 \mapsto \lambda, c_2 \mapsto 0, c_3
\mapsto 0, c_4 \mapsto 1, c_5 \mapsto 0, c_6 \mapsto 1, c_7 \mapsto
-\lambda$ for all $\lambda \in \reals$.

If $\lambda$ is small, the program needs more iterations for the same
input, and if $\lambda$ is zero, $L_\alpha$ will not terminate at all.
\end{example}

In order to address this, the synthesis condition can be augmented
with a series of test cases, predefined input-output pairs that
explicitly state the transitions required to compute them.

\begin{definition}[Test Case]\label{def-testcase}
Let $(C, L, \post)$ be a synthesis problem where
$L = (V, \stem, \mathcal{T}, \exit)$ is a pseudo-deterministic
polynomial lasso program containing template variables $C$.
A \emph{test case}
$t = (\nu_0, \nu, \tau_1 \ldots \tau_k)$ consists of two
$V$-valuations $\nu_0$ and $\nu$ corresponding to the initial and
final state respectively such that $\nu \models \exit$, as well as a
finite sequence of transitions $\tau_1, \ldots, \tau_k \in
\mathcal{T}$.
A solution $\alpha$ to a synthesis problem $S$ is said to \emph{adhere
  to the test case $t$} if, for $\nu_i = \tau_i \circ \ldots \circ
\tau_1(\nu_0)$, the sequence $\sigma = \nu_0 \nu_1 \ldots \nu_k$ is an
execution of $L_\alpha$ and $\nu_k = \nu$.
\end{definition}

\begin{lemma}\label{lem-testcase}
Let $S = (C, L, \post)$ be a synthesis problem with solution $\alpha$
and let $t = (\nu_0, \nu, \tau_1 \ldots \tau_k)$ be a test case.  If
\begin{align}
\nu_0 &\models \alpha(\stem), \label{eq-testcase1} \\
\nu &= \alpha(\tau_k) \circ \ldots \circ \alpha(\tau_1)(\nu_0),
\label{eq-testcase2} \\
\nu_i &\not\models \exit \text{ for } 0 \leq i \leq k - 1, \text{ and}
\label{eq-testcase3} \\
\nu_k &\models \exit \label{eq-testcase4}
\end{align}
then $L_\alpha$ adheres to the test case $t$.
\end{lemma}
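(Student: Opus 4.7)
The plan is to unpack the definition of adherence (Definition~\ref{def-testcase}), which requires two things: (i) the sequence $\sigma = \nu_0 \nu_1 \ldots \nu_k$ with $\nu_i = \alpha(\tau_i) \circ \cdots \circ \alpha(\tau_1)(\nu_0)$ is an execution of $L_\alpha$ in the sense of Definition~\ref{def-lasso-semantics}, and (ii) the final valuation $\nu_k$ coincides with the prescribed $\nu$. Part (ii) is immediate from premise~\eqref{eq-testcase2}, so almost all of the work is to verify the three clauses of the semantics of executions.

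First I would observe that the composition notation $\alpha(\tau_i) \circ \cdots \circ \alpha(\tau_1)(\nu_0)$ is well defined: since $L$ is pseudo-deterministic, every $\alpha(\tau_i)$ is a deterministic transition of the form $\bigwedge_j h_j = 0 \land \bigwedge_l x_l' g_l - f_l = 0$ with $\neg\exit \models g_l \neq 0$, so for any $V$-valuation $\mu$ satisfying $\neg\exit$, there is at most one successor $\mu'$ with $\alpha(\tau_i)(\mu,\mu')$. By premise~\eqref{eq-testcase3} each intermediate $\nu_{i-1}$ satisfies $\neg\exit$, so the successor $\nu_i$ exists and is unique, and moreover $\alpha(\tau_i)(\nu_{i-1},\nu_i)$ holds as a binary relation. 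This gives clause~2 of Definition~\ref{def-lasso-semantics} for every $0 \leq i \leq k-1$.

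Next I would discharge clauses~1 and~3 of the execution semantics. Premise~\eqref{eq-testcase1} directly asserts $\nu_0 \models \alpha(\stem)$, which is clause~1. For clause~3, premise~\eqref{eq-testcase3} says $\nu_i \not\models \exit$ for $0 \leq i \leq k-1$ and premise~\eqref{eq-testcase4} says $\nu_k \models \exit$; together these state exactly that $\nu_i \models \exit$ iff $i = k$, which is clause~3. Hence $\sigma$ is a valid finite execution of $L_\alpha$. Combined with $\nu_k = \nu$ from~\eqref{eq-testcase2}, this establishes adherence to $t$.

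The only genuine subtlety, and the step I would take most care with, is the well-definedness of the deterministic composition: the premise~\eqref{eq-testcase2} literally writes $\nu = \alpha(\tau_k) \circ \cdots \circ \alpha(\tau_1)(\nu_0)$, and one has to justify treating each $\alpha(\tau_i)$ as a (partial) function rather than a relation. Pseudo-determinism of $L$ plus the exit-free intermediate states from~\eqref{eq-testcase3} is exactly what makes this interpretation coherent; without pseudo-determinism, the statement of the lemma itself would need reinterpretation. Everything else is a direct unwinding of definitions.
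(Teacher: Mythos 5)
Your proof is correct and follows essentially the same route as the paper's, which simply observes that $\sigma$ is by construction an execution of $L_\alpha$ and that $\nu_k = \nu$ follows from~\eqref{eq-testcase2}; your version merely fills in the routine verification of the three clauses of Definition~\ref{def-lasso-semantics} and the well-definedness of the functional composition that the paper leaves implicit.
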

\begin{proof}
$\sigma = \nu_0 \nu_1 \ldots \nu_k$ for $\nu_i = \alpha(\tau_i) \circ
  \ldots \circ \alpha(\tau_1)(\nu_0)$ is by construction an execution
of $L_\alpha$ according to \autoref{def-lasso-semantics}.  From
\eqref{eq-testcase2} follows that $\nu_k = \nu$. \qed
\end{proof}

If we add the equations \eqref{eq-testcase1}, \eqref{eq-testcase2},
\eqref{eq-testcase3} and \eqref{eq-testcase4} to the synthesis
condition for every given test case, then by \autoref{lem-testcase}
any solution to these constraints will yield a solution to $S$ that
adheres to the test cases.

\begin{example}\label{ex-product7}
Consider the synthesis problem $S'$ from \autoref{ex-product6}.  The
execution $\sigma$ from \autoref{ex-product1} gives rise to the test
case $t = (\nu_0, \nu_1, \tau)$, which by \autoref{lem-testcase} adds
the following additional constraints on the synthesis condition.
\begin{align*}
\begin{aligned}
1 &= 1 \\
0 &= 0
\end{aligned} &&
\begin{aligned}
0 &= 1c_6 + c_7 \\
3 &= 3c_1 + 1c_2 + 1c_3 + 0c_4 + c_5
\end{aligned} &&
\begin{aligned}
1 &\neq 0 \\
0 &= 0
\end{aligned}
\end{align*}
The valuation $\alpha_1$ is the only one of the valuations $\alpha_\lambda$ given in \autoref{ex-product6} that models these two equations (however, it is not the only possible solution).
$L_{\alpha_1}$ is a terminating lasso program for positive
integers $y_0$.
\end{example}

In theory, if it is possible to synthesize a terminating program,
then there exists a finite set of test cases that will guarantee
that a terminating lasso is synthesized.
\begin{theorem}\label{thm-finite-testcases}
Let $S = (C, L, \post)$ be a synthesis problem.  If there is a solution
$\alpha$ to $S$ such that $L_\alpha$ is terminating then there is a
finite set of test cases $\Sigma$ such that any solution $\beta$ of
$S$ which adheres to all test cases $t \in \Sigma$ is terminating.
\end{theorem}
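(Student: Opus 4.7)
The approach is to construct $\Sigma$ from the terminating executions of $L_\alpha$ itself and then keep $\Sigma$ finite via the Noetherianity of $\reals[C]$. First I would consider the collection $T$ of all test cases arising from $\alpha$: for every initial valuation $\nu_0$ satisfying $\alpha(\stem)$ and the terminating execution $\nu_0 \nu_1 \ldots \nu_k$ of $L_\alpha$ from $\nu_0$, include the test case $(\nu_0, \nu_k, \tau_1 \ldots \tau_k)$. These executions are finite because $L_\alpha$ terminates, and by construction $\alpha$ itself adheres to every $t \in T$.

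Next, by \autoref{lem-testcase}, requiring a candidate solution $\beta$ to adhere to a test case $t \in T$ is equivalent to a conjunction of polynomial equalities and polynomial disequalities in the synthesis variables $C$: equalities from \eqref{eq-testcase1}, \eqref{eq-testcase2}, and \eqref{eq-testcase4}, and disequalities from \eqref{eq-testcase3}. Collecting all equality polynomials over $t \in T$ into a (possibly infinite) subset $P_{eq} \subseteq \reals[C]$, the ideal $\langle P_{eq} \rangle$ is finitely generated by Hilbert's basis theorem, so a finite subfamily $\Sigma_0 \subseteq T$ already generates the same equality ideal. The remaining disequalities $\exit(\nu_i) \neq 0$ contributed by the test cases in $\Sigma_0$ are only finitely many, so $\Sigma := \Sigma_0$ is our candidate finite set.

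The correctness argument then proceeds by showing that any synthesis solution $\beta$ adhering to $\Sigma$ must replicate $\alpha$'s finite behaviour on every initial valuation used by some $t \in T$: the equality constraints in $\Sigma$ pin down enough coefficients in $\beta(\tau)$ along the orbits of $\alpha$'s executions that the intermediate states along each $t \in T$ match those of $\alpha$, so $L_\beta$ terminates on each of those inputs. One then lifts this to termination on all of $\beta(\stem)$ by combining the synthesis condition, which already forces the polynomial invariant of $\beta$ to hold, with the pinned-down update polynomials.

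The main obstacle is this last lifting step: matching $\alpha$'s behaviour on the initial states in $\alpha(\stem)$ does not immediately rule out that $\beta(\stem)$ admits strictly more initial valuations on which $L_\beta$ diverges. Closing this gap appears to require either enlarging $\Sigma$ by finitely many additional ``stem-pinning'' test cases — again justified by Noetherianity of the defining ideal of $\stem$ — or a more subtle argument that the template structure of the synthesis problem together with the equality constraints collected from $\Sigma$ forces $\beta(\stem)$ to coincide with $\alpha(\stem)$. I expect the cleanest write-up to combine both ideas, invoking the Hilbert basis theorem once for the transition constraints and once for the stem constraints, and to be non-constructive in that it only asserts the existence of $\Sigma$ without giving an effective bound on its size.
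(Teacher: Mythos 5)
Your proposal is essentially the paper's own proof: it likewise takes the test cases of \emph{all} executions of $L_\alpha$, observes that their equality constraints generate an ideal in $\reals[C]$, and invokes Noetherianity (the paper phrases it as the Ascending Chain Condition on $I_0 \subseteq I_1 \subseteq \ldots$ rather than the Hilbert basis theorem, but this is the same fact) to extract a finite subset $\Sigma_k$ defining the same ideal and hence the same solution set for $C$. The ``lifting'' gap you flag --- that adhering to $\Sigma$ pins down $\beta$ only along $\alpha$'s orbits and does not obviously exclude divergent executions of $L_\beta$ from initial states outside $\alpha(\stem)$ or via other transition choices --- is a genuine concern, but the paper does not close it either: its proof simply asserts that equal ideals give equal $C$-assignments and hence that $\sigma$ is an execution of $L_\beta$ iff it is one of $L_\alpha$, so your more cautious treatment is, if anything, an improvement rather than a deviation.
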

\begin{proof}
Let $\Sigma = \{ t_0, t_1, \ldots \}$ be the test cases to all
possible executions of $L_\alpha$, and assume $\Sigma$ is infinite
(otherwise there is nothing to show).
Each test case $t \in \Sigma$ corresponds to a polynomial assertion
over the variables $C$ by \eqref{eq-testcase1} and
\eqref{eq-testcase2}.  This assertion constrains
possible assignments of $C$.  For every $i \geq 0$, let $\Sigma_i
= \{ t_0, \ldots, t_i \} \subset \Sigma$ be an ascending chain of
finite subsets of $\Sigma$ and let $I_i$ be the ideal generated by the
assertions from the test cases of $\Sigma_i$.
It is clear that $\Sigma_i \subset \Sigma_{i+1}$, and hence $I_i
\subseteq I_{i+1}$.  By the \emph{Ascending Chain
  Condition}~\cite{Cox91}, the ascending chain of ideals $I_0
\subseteq I_1 \subseteq \ldots$ must become stationary for some
integer $k$, meaning $I_k = I_i$ for all $i \geq k$.  This implies
that the finite set of test cases $\Sigma_k$ corresponds to the same
ideal as $\Sigma_i$ for $i \geq k$ and hence they have the same
solution (set of assignments) for $C$.  
As a consequence, any solution $\beta$
to $S$ adhering to the test cases from $\Sigma_k$ will enforce that
$\sigma$ is an execution of $L_\beta$ iff it is an execution of
$L_\alpha$. \qed
\end{proof}

While \autoref{thm-finite-testcases} assures that under any
circumstances, a finite set of test cases $\Sigma$ suffices to force a useful
solution from the synthesis problem, no upper bound to the cardinality
of $\Sigma$ is given.

In theory, this provides us with two powerful approaches of generating
polynomial lasso programs, given an a priori bound on the number of program
variables $V$. Both involve creating a polynomial lasso program with generic
template polynomials as updates and guards.
\begin{enumerate}
\item Specify a (large) number of test cases.  Ideally, these test
  cases can be automatically generated in some sophisticated way that
  ensures that they are not too redundant.
\item Provide a post condition and a complexity guess.  Using the
  complexity guess, a terminating skeleton of the synthesis problem is
  generated using counter variables.  The post condition provides a
  statement regarding the program's purpose.
\end{enumerate}
Needless to say, both approaches create very large
synthesis conditions that are unlikely to be handled automatically
by present-day non-linear solvers, but this can 
change, especially for small program fragments, as technology develops.


\section{Experimental Evaluation}
\label{sec-benchmark}

We implemented our method in Haskell and used \texttt{nlsat}~\cite{Jovanovic12}\footnote{As implemented in z3 version 4.3.1.~\url{http://z3.codeplex.com/}} to solve the non-linear constraints.
To evaluate the practicability and scalability of our method, we ran
it on a few selected examples which are listed in
\autoref{table-examples} together with a short description.
Each example translates to a pseudo-deterministic polynomial lasso program.
\ifappendix
See the \hyperref[sec-appendix-code]{appendix} for the source code to the examples
as well as the discovered solutions to the constraints.
\fi

\begin{table}[t]
\centering
\tabcolsep=2mm
\begin{tabular}{l p{98mm}}
\emph{name} & \emph{description}
\\ \hline
\texttt{product} & multiplication of two integers by repeated addition (see \autoref{fig-product} and \autoref{ex-product1}) \\
\texttt{productS} & \texttt{product} with synthesis of one update
statement (see \autoref{ex-product4}) \\
\texttt{productSY} & \texttt{product} with synthesis of the loop body,
including the termination-critical variable $y$
(see \autoref{ex-product6}) \\
\texttt{product2} & \texttt{product} with reciprocal $y$ \\
\texttt{product2S} & \texttt{product2} with synthesis of one update
statement \\
\texttt{gcd\_lcm}  & greatest common denominator and least common multiple
  of two integers~\cite{Sankaranarayanan04} \\
\texttt{gcd\_lcmS} & \texttt{gcd\_lcm} with synthesis of two update
statements \\
\texttt{div\_mod}  & integer division with remainder~\cite{Dijkstra76} \\
\texttt{div\_modS} & \texttt{div\_mod} with synthesis of the complete
loop body with linear updates \\
\texttt{root2} & integer square root~\cite{Dijkstra76} \\
\texttt{root2S} & \texttt{root2} with synthesis of the stem and one update statement \\
\texttt{squareS} & square of an integer synthesized from a terminating
skeleton with linear assignments \\
\texttt{cubeS} & cube of an integer synthesized from a terminating
skeleton with linear assignments
\end{tabular}
\caption{Example programs used to perform verification and synthesis
experiments described in \autoref{table-benchmark}.
\ifappendix
For the source code to the examples see the \hyperref[sec-appendix-code]{appendix}.
\fi
}
\label{table-examples}
\end{table}

\autoref{table-benchmark} contains the experiment's results.  We list
the program name together with the number of synthesis variables
($\#C$), the degree of the loop invariant's generic template
polynomial ($\deg$), the number of its template variables ($\#A$), the
total number of variables in the generated constraint ($\#$vars), the
number of test cases used ($\#$tc), the time to generate the constraint in seconds (constraints time) and the running time of the SMT solver in seconds (solver time).
Our test system was a computer with eight AMD Opteron 8220 2.80GHz
CPUs and 32GB RAM.

\begin{table}[t]
\centering
\tabcolsep=1mm
\begin{tabular}{lrrrrrrr}
\emph{name} & $\#C$ & $\deg$ & $\#A$ & $\#$vars & $\#$tc
  & \emph{constraints time (s)} & \emph{solver time (s)} \\
\hline
\texttt{product}   &  0 & 2 & 15 & 20 & 0 &   0.55 &   0.02 \\
\texttt{productS}  &  5 & 2 & 15 & 25 & 0 &   1.47 &   0.01 \\
\texttt{productSY} &  7 & 2 & 15 & 27 & 2 &   3.39 &   0.02 \\
\texttt{product2}  &  0 & 3 & 35 & 50 & 0 &  39.23 & 128.24 \\
\texttt{product2S} &  5 & 3 & 35 & 55 & 0 & 200.20 &  24.46 \\
\texttt{gcd\_lcm}  &  0 & 2 & 28 & 42 & 0 &  11.85 &   0.02 \\
\texttt{gcd\_lcmS} & 10 & 2 & 28 & 52 & 0 &  17.01 &   0.01 \\
\texttt{div\_mod}  &  0 & 2 & 15 & 16 & 0 &   0.62 &   0.01 \\
\texttt{div\_modS} & 10 & 2 & 15 & 26 & 5 &  10.03 &   0.03 \\
\texttt{root2}     &  0 & 2 & 15 & 25 & 0 &   2.80 &   4.52 \\
\texttt{root2S}    &  9 & 2 & 15 & 34 & 0 &   3.80 &   0.02 \\
\texttt{squareS}   &  6 & 2 & 10 & 20 & 0 &   0.56 &   0.00 \\
\texttt{cubeS}     & 14 & 3 & 35 & 54 & 0 &  90.88 &  41.05 \\
\end{tabular}
\caption{Experimental results showing the time required to
  verify/synthesize various example programs, along with the size of
  the non-linear constraints solved in the
  process.}\label{table-benchmark}
\end{table}

While the synthesis process is very fast for small examples, the
non-linear constraint solver becomes the bottleneck in
medium-sized problems (\texttt{product2}, \texttt{product2S} and
\texttt{cubeS} use generic templates of degree $3$):
solving non-linear constraints scales poorly with the number of variables involved.
Test cases might help mitigate this issue by significantly reducing the solution space.


\section{Conclusion}
\label{sec-conclusion}

We presented a method for synthesizing polynomial programs.  This
method is based on the discovery of non-linear loop invariants that
prove the program's correctness.  We generate a \emph{synthesis
  condition}, a non-linear constraint whose solution is the
synthesized polynomial lasso program and a loop invariant.  We extended
existing methods for non-linear invariant generation and provided a
completeness criterion (\autoref{thm-completeness}).
If we synthesize update statements of variables
that occur in the exit condition, termination becomes a concern.
We showed that we can utilize a finite
set of test cases to restrict the solution space to terminating lassos
(\autoref{thm-finite-testcases}).

Using a benchmark of small examples, we showed that our method is
applicable for the synthesis of small programs, as well as parts of
medium-sized ones.  A resource bottleneck is the non-linear
constraint solver.  As the solving of non-linear constraints is an
active area of research, we expect that our technique will become more
effective as non-linear solvers improve.

We assumed that the programs' variables take values in the set of reals
$\reals$, but since Gröbner bases are computable over
rings~\cite{Adams94,BachmairTiwari97:RTAsmall}, our method can also be
applied to the integers $\integers$ or the finite ring of machine
integers $\integers_{2^w}$ (see also \cite{Seidl08}).
Future work could also consider the question of how this method can be
improved to handle inequalities.

\subsubsection*{Acknowledgements}
We would like to thank the anonymous reviewers for their valuable
feedback.


\bibliographystyle{abbrv}
\bibliography{references}


\ifappendix
\newpage
\section*{Appendix: Source Code to the Experiments}
\label{sec-appendix-code}

This appendix lists the source code for the example programs used in
\autoref{sec-benchmark} (compare \autoref{table-examples} for a short
description).  The programs are given in pseudo-code rather than
polynomial lasso programs for improved readability and for completeness: some
parts of the program code have to be omitted in the
translation to polynomial lasso programs
(e.g.\ the exit condition in \texttt{div\_mod} and
\texttt{div\_modS} is an inequality).
However, this translation is straightforward.
The result is a \emph{pseudo-deterministic} polynomial lasso program in each example.
We provide the assignment $\alpha$ to the generated constraints as found by the SMT solver.
In our examples, we set $\Omega_i = 1$ in the synthesis condition.

\dline \\
For the source code to the program \texttt{product} see
\autoref{fig-product}.  Also compare \autoref{ex-product1} and
\autoref{ex-product3}.  We use the generic template of degree $2$ over
the variables $V = \{ x_0, y_0, y, s \}$.
\begin{align*}
\tilde\alpha(\Psi) &= \frac{1}{4} (x_0 y_0 - x_0 y - s) \\
\tilde\alpha(\Phi) &= y
\end{align*}

\dline
\begin{lstlisting}
procedure productS($x_0$, $y_0$):
    $s$ := $0$;
    $y$ := $y_0$;
    while ($y \neq 0$):
        $s$ := $c_0 x_0 + c_1 y_0 + c_2 y + c_3 s + c_4$;
        $y$ := $y - 1$;
    assert($s = x_0 y_0$);
    return $s$;
\end{lstlisting}
From \autoref{ex-product4}.  $C = \{ c_0, c_1, c_2, c_3, c_4 \}$; we
use the generic template of degree $2$ over the variables $V = \{ x_0,
y_0, y, s \}$.
\begin{align*}
\alpha(c_0) &= 1, \quad \alpha(c_1) = \alpha(c_2) = 0, \quad
\alpha(c_3) = 1, \quad \alpha(c_4) = 0 \\
\tilde\alpha(\Psi) &= s - x_0 y_0 + x_0 y \\
\tilde\alpha(\Phi) &= y
\end{align*}

\dline
\begin{lstlisting}
procedure productSY($x_0$, $y_0$):
    $s$ := $0$;
    $y$ := $y_0$;
    while ($y \neq 0$):
        $s$ := $c_0 x_0 + c_1 y_0 + c_2 y + c_3 s + c_4$;
        $y$ := $c_5 y + c_6$;
    assert($s = x_0 y_0$);
    return $s$;
\end{lstlisting}
Test cases:
\begin{itemize}
\item \texttt{productSY(3, 1) == 1} (1 loop iteration)
\item \texttt{productSY(3, 2) == 6} (2 loop iterations)
\end{itemize}
From \autoref{ex-product6}.  $C = \{ c_0, c_1, c_2, c_3, c_4 \}$; we
use generic template of degree $2$ over the variables $V = \{ x_0,
y_0, y, s \}$.
\begin{align*}
\alpha(c_0) &= 1, \quad \alpha(c_1) = -\frac{1}{2}, \quad
\alpha(c_2) = 1, \quad \alpha(c_3) = 1, \quad
\alpha(c_4) = -\frac{1}{2}, \\
\alpha(c_5) &= 1, \quad \alpha(c_6) = -1 \\
\tilde\alpha(\Psi) &= s - x_0 y_0 + x_0 y + \frac{1}{2} y^2
  - \frac{1}{2} y_0 y \\
\tilde\alpha(\Phi) &= y
\end{align*}

\dline
\begin{lstlisting}
procedure product2($x_0$, $y_0$):
    $s$ := $x_0$;
    $y$ := $\frac{1}{y_0}$
    while ($y \neq 1$):
        $s$ := $s + x_0$;
        $y$ := $\frac{y}{1 - y}$;
    return $s$;
\end{lstlisting}
This example differs from \texttt{product} by the use of the
reciprocal value of $y$.  The assignments \texttt{$y$ :=
  $\frac{1}{y_0}$;} and \texttt{$y$ := $\frac{y}{1 - y}$;} are
translated to the polynomials $y' y_0 - 1$ and $y' (1 - y) - y$
respectively.  We use the generic template of degree $3$ over the
variables $V = \{ x_0, y_0, y, s \}$.
\begin{align*}
\tilde\alpha(\Psi) &= \sqrt{2} (x_0 y - x_0 + x_0 y_0 y
- y s) \\
\tilde\alpha(\Phi) &= y - 1
\end{align*}

\dline
\begin{lstlisting}
procedure product2S($x_0$, $y_0$):
    $s$ := $x_0$;
    $y$ := $\frac{1}{y_0}$
    while ($y \neq 1$):
        $s$ := $c_0 x_0 + c_1 y_0 + c_2 y + c_3 s + c_4$;
        $y$ := $\frac{y}{1 - y}$;
    assert($s = x_0 y_0$);
    return $s$;
\end{lstlisting}
We use the generic template of degree $3$ over the variables $V = \{
x_0, y_0, y, s \}$.
\begin{align*}
\alpha(c_0) &= 1, \quad \alpha(c_1) = \alpha(c_2) = 0, \quad
\alpha(c_3) = 1, \quad \alpha(c_4) = 0 \\
\tilde\alpha(\Psi) &= x_0 - x_0 y - x_0 y_0 y + y s \\
\tilde\alpha(\Phi) &= y - 1
\end{align*}

\dline
\begin{lstlisting}
procedure gcd_lcm($x_1$, $x_2$):
    $y_1$ := $x_1$;
    $y_2$ := $x_2$;
    $y_3$ := $x_2$;
    $y_4$ := $0$;
    while ($y_1 \neq y_2$):
        if ($y_1 > y_2$):
            $y_1$ := $y_1 - y_2$;
            $y_4$ := $y_4 + y_3$;
        else:
            $y_2$ := $y_2 - y_1$;
            $y_3$ := $y_3 + y_4$;
    assert($y_1 (y_3 + y_4) - x_1 x_2$);
    return $(y_1, y_3 + y_4)$;
\end{lstlisting}
The inequality $y_1 > y_2$ cannot be translated into a polynomial lasso program
syntax and is thus omitted.
We use the generic template of degree $2$ over the
variables $V = \{ x_1, x_2, y_1, y_2, y_3, y_4 \}$.
\begin{align*}
\tilde\alpha(\Psi) &= x_1 x_2 - y_1 y_3 - y_2 y_4 \\
\tilde\alpha(\Phi_1) = \tilde\alpha(\Phi_2) &= y_1 - y_2 
\end{align*}

\dline
\begin{lstlisting}
procedure gcd_lcmS($x_1$, $x_2$):
    $y_1$ := $x_1$;
    $y_2$ := $x_2$;
    $y_3$ := $x_2$;
    $y_4$ := $0$;
    while ($y_1 \neq y_2$):
        if ($y_1 > y_2$):
            $y_4$ := $c_0 y_1 + c_1 y_2 + c_2 y_3 + c_3 y_4 + c_4$;
            $y_1$ := $y_1 - y_2$;
        else:
            $y_3$ := $c_5 y_1 + c_6 y_2 + c_7 y_3 + c_8 y_4 + c_9$;
            $y_2$ := $y_2 - y_1$;
    assert($y_1 (y_3 + y_4) - x_1 x_2$);
    return $(y_1, y_3 + y_4)$;
\end{lstlisting}
The inequality $y_1 > y_2$ cannot be translated into a polynomial lasso program
syntax and is thus omitted.  $C = \{ c_0, c_1, c_2,
c_3, c_4, c_5, c_6, c_7, c_8, c_9 \}$; we use the generic template of
degree $2$ over the variables $V = \{ x_1, x_2, y_1, y_2, y_3, y_4
\}$.
\begin{align*}
\alpha(c_0) = \alpha(c_1) &= 0, \quad \alpha(c_2) = \alpha(c_3) = 1,
\quad \alpha(c_4) = 0, \\
\alpha(c_5) = \alpha(c_6) &= 0, \quad \alpha(c_7) = \alpha(c_8) = 1,
\quad \alpha(c_9) = 0, \\
\tilde\alpha(\Psi) &= x_1 x_2 - y_1 y_3 - y_2 y_4 \\
\tilde\alpha(\Phi_1) = \tilde\alpha(\Phi_2) &= y_1 - y_2 
\end{align*}

\dline
\begin{lstlisting}
procedure div_mod($a$, $d$):
    $q$ := $0$;
    $r$ := $a$;
    while ($r \geq d$):
        $r$ := $r - d$;
        $q$ := $q + 1$;
    return $(q, r)$;
\end{lstlisting}
The while-condition $r \geq d$ is translated to $true$.
We use the generic template of degree $2$ over the variables $V = \{
a, d, q, r \}$.
\begin{align*}
\tilde\alpha(\Psi) &= \frac{1}{2} (r + qd - a) \\
\tilde\alpha(\Phi) &= 1
\end{align*}

\dline
\begin{lstlisting}
procedure div_modS($a$, $d$):
    $q$ := $0$;
    $r$ := $a$;
    while ($r \geq d$):
        $r$ := $c_0 a + c_1 d + c_2 q + c_3 r + c_4$;
        $q$ := $c_5 a + c_6 d + c_7 q + c_8 r_{old} + c_9$;
    return $(q, r)$;
\end{lstlisting}
Test cases:
\begin{itemize}
\item \texttt{div\_modS(4, 3) == (1, 1)} (1 loop iteration)
\item \texttt{div\_modS(5, 2) == (2, 1)} (2 loop iterations)
\item \texttt{div\_modS(1, 1) == (1, 0)} (1 loop iteration)
\item \texttt{div\_modS(15, 6) == (2, 3)} (2 loop iterations)
\item \texttt{div\_modS(17, 17) == (1, 0)} (1 loop iterations)
\end{itemize}
The while-condition $r \geq d$ is translated to $true$.  The synthesis
relies purely on test cases for correctness and termination as there
are no exit or post condition supplied.  $C = \{ c_0, c_1, c_2, c_3,
c_4, c_5, c_6, c_7, c_8, c_9 \}$; we use the generic template of
degree $2$ over the variables $V = \{ a, d, q, r \}$.
\begin{align*}
\alpha(c_0) &= 0, \quad \alpha(c_1) = -1, \quad \alpha(c_2) = 0, \quad
\alpha(c_3) = 1, \quad \alpha(c_4) = 0, \\
\alpha(c_5) &= 0, \quad \alpha(c_6) = 0, \quad \alpha(c_7) = 1, \quad
\alpha(c_8) = 0, \quad \alpha(c_9) = 1 \\
\tilde\alpha(\Psi) &= r + dq - a \\
\tilde\alpha(\Phi) &= 1
\end{align*}

\dline
\begin{lstlisting}
procedure root2($n$):
    $p$ := $0$;
    $q$ := $1$;
    $r$ := $n$;
    while ($q \leq n$):
        $q$ := $4q$;
    while ($q \neq 1$):
        $q$ := $\frac{q}{4}$;
        $h$ := $p + q$;
        $p$ := $\frac{p}{2}$;
        if ($r \geq h$):
            $p$ := $p + q$;
            $r$ := $r - h$;
    assert($n = p^2 + r$);
    return $(p, r)$;
\end{lstlisting}
The first while loop is translated to an arbitrary assignment to $q$,
the \texttt{if}-condition is omitted.  We use the generic template of degree
$2$ over the variables $V = \{ p, q, r, n \}$.
\begin{align*}
\tilde\alpha(\Psi) &= \frac{1}{8} (nq - qr - p^2)\\
\tilde\alpha(\Phi_1) =
\tilde\alpha(\Phi_2) &= \frac{1}{4} (q - 1)
\end{align*}

\dline
\begin{lstlisting}
procedure root2S($n$):
    $p$ := $c_0 n + c_1$;
    $q$ := $1$;
    $r$ := $c_2 n + c_3$;
    while ($q \leq n$):
        $q$ := $4q$;
    while ($q \neq 1$):
        $q$ := $\frac{q}{4}$;
        $h$ := $p + q$;
        $p$ := $\frac{p}{2}$;
        if ($r \geq h$):
            $p$ := $p + q$;
            $r$ := $c_4 r + c_5 p_{old} + c_6 q_{old} + c_7 n + c_8$;
    assert($n = p^2 + r$);
    return $(p, r)$;
\end{lstlisting}
$C = \{ c_0, c_1, c_2, c_3, c_4, c_5, c_6, c_7, c_8 \}$.  The first
\texttt{while} loop is translated to an arbitrary assignment to $q$,
the \texttt{if}-condition is omitted.
This example has a parameterized program stem.
We use the generic template of degree $2$ over the
variables $V = \{ p, q, r, n \}$.
\begin{align*}
\alpha(c_0) = \alpha(c_1) &= 0, \quad \alpha(c_2) = 1, \alpha(c_3) =
0, \\
\alpha(c_4) &= 1, \quad \alpha(c_5) = -1, \quad \alpha(c_6) =
\frac{1}{4}, \quad
\alpha(c_7) = \alpha(c_8) = 0 \\
\tilde\alpha(\Psi) &= nq - qr - p^2 \\
\tilde\alpha(\Phi_1) = \tilde\alpha(\Phi_2) &= \frac{1}{4} (q - 1)
\end{align*}

\dline
\begin{lstlisting}
procedure squareS($n$):
    $a$ := $n$;
    $b$ := $c_0 n + c_1$;
    while ($a \neq 0$):
        $b$ := $c_2 a + c_3 b + c_4 n + c_5$;
        $a$ := $a - 1$;
    assert($b = n^2$);
    return $b$;
\end{lstlisting}
$C = \{ c_0, c_1, c_2, c_3, c_4, c_5 \}$.  This is an example for a
synthesis from terminating lasso program skeleton.
The lasso is constructed from
a complexity guess ($n$ steps to termination) and an auxiliary
variable $b$ (see also the program \texttt{cubeS}).  We use the
generic template of degree $2$ over the variables $V = \{ a, b, n \}$.
\begin{align*}
\alpha(c_0) &= -\frac{1}{2}, \quad \alpha(c_1) = 0, \\
\alpha(c_2) &= 2, \quad \alpha(c_3) = 1, \quad \alpha(c_4) = 0, \quad
\alpha(c_5) = -\frac{1}{2} \\
\tilde\alpha(\Psi) &= b - n^2 + \frac{1}{2} a + a^2 \\
\tilde\alpha(\Phi) &= a
\end{align*}

\dline
\begin{lstlisting}
procedure cubeS($n$):
    $a$ := $n$;
    $b$ := $c_0 n + c_1$;
    $c$ := $c_2 n + c_3$;
    while ($a \neq 0$):
        $b$ := $c_4 a + c_5 b + c_6 c + c_7 n + c_8$;
        $c$ := $c_9 a + c_{10} b_{old} + c_{11} c + c_{12} n + c_{13}$;
        $a$ := $a - 1$;
    assert($b = n^3$);
    return $b$;
\end{lstlisting}
$C = \{ c_0, c_1, c_2, c_3, c_4, c_5, c_6, c_7, c_8, c_9, c_{10},
c_{11}, c_{12}, c_{13} \}$.  This is an example for a synthesis from
terminating lasso program skeleton.
The lasso is constructed from a complexity
guess ($n$ steps to termination) and an auxiliary variables $b$ and
$c$ (see also the program \texttt{squareS}).  We use the generic
template of degree $3$ over the variables $V = \{ a, b, c, n \}$.
\begin{align*}
\alpha(c_0) &= -\sqrt{8} + \frac{1}{2}, \quad \alpha(c_1) = 0, \quad
\alpha(c_2) = \frac{1}{2}(\sqrt{2} - 1), \quad \alpha(c_3) = \sqrt{2}, \\
\alpha(c_4) &= 0, \quad \alpha(c_5) = 1, \quad \alpha(c_6) = 1, \quad
\alpha(c_7) = \frac{1}{\sqrt{2}} + 1, \quad
\alpha(c_8) = 0, \\
\alpha(c_9) &= \frac{3}{2}, \quad \alpha(c_{10}) = 0, \quad
\alpha(c_{11}) = 1, \quad \alpha(c_{12}) = 1, \quad \alpha(c_{13}) =
-\sqrt{8} \\
\tilde\alpha(\Psi) &= b - n^3 + \frac{1}{2} a^2 n + \frac{1}{2} a^3 +
ac + \Big(\sqrt{2} - \frac{1}{2} \Big) a + \frac{1}{2} \Big(\sqrt{2} +
1 \Big) an - \sqrt{2} a^2 \\
\tilde\alpha(\Phi) &= a
\end{align*}
\fi

\end{document}